\newcommand{\wep}{{\sl{w.e.p.}}}
\newcommand{\aas}{{\sl{a.a.s.}}}
\begin{document}
\newtheorem{thm}{Theorem}[section]
\newtheorem{lemma}[thm]{Lemma}
\newtheorem{definition}[thm]{Definition}
\newtheorem{conjecture}[thm]{Conjecture}
\newtheorem{proposition}[thm]{Proposition}
\newtheorem{algorithm}[thm]{Algorithm}
\newtheorem{corollary}[thm]{Corollary}
\newcommand{\noin}{\noindent}
\newcommand{\ind}{\indent}
\newcommand{\om}{\omega}
\newcommand{\pp}{\mathcal P}
\newcommand{\ppp}{\mathfrak P}
\newcommand{\N}{{\mathbb N}}
\newcommand\eps{\varepsilon}
\newcommand{\E}{\mathbb E}
\newcommand{\Prob}{\mathbb{P}}
\newcommand{\Rrr}{{\mathbb R}}
\newcommand{\sbrac}[1]{\left({\scriptstyle #1}\right)}
\newcommand{\tbrac}[1]{\left({\textstyle #1}\right)}
\newcommand{\smorl}[1]{{\scriptstyle #1}}
\newcommand{\sbfrac}[2]{\left(\frac{\scriptstyle #1}{\scriptstyle #2}\right)}
\newcommand{\tbfrac}[2]{\left(\frac{\textstyle #1}{\textstyle #2}\right)}
\newcommand{\sfrac}[2]{\frac{\scriptstyle #1}{\scriptstyle #2}}
\newcommand{\bfrac}[2]{\left(\frac{#1}{#2}\right)}
\newcommand{\brac}[1]{\left(#1\right)}
\newcommand{\Ppp}{{\mathbb P}}
\newcommand{\Eee}{{\mathbb E}}
\newcommand{\ul}[1]{\mbox{\boldmath$#1$}}
\newcommand{\ulu}[1]{\mbox{\bf #1}}
\def\a{\alpha} \def\b{\beta} \def\d{\delta} \def\D{\Delta}
\def\e{\epsilon} \def\f{\phi} \def\F{{\Phi}} \def\g{\gamma}
\def\G{\Gamma} \def\k{\kappa}\def\K{\Kappa}
\def\z{\zeta} \def\th{\theta} \def\TH{\Theta} \def\Th{\Theta} \def\la{\lambda}
\def\La{\Lambda} \def\m{\mu} \def\n{\nu} \def\p{\pi}
\def\r{\rho} \def\R{\Rho} \def\s{\sigma} \def\S{\Sigma}
\def\t{\tau} \def\om{\omega} \def\OM{\Omega} \def\Om{\Omega}

\title[Geometric Graph Properties of the SPA model]{Geometric Graph Properties of the Spatial Preferred Attachment model}

\author{Jeannette Janssen}
\address{Department of Mathematics and Statistics, Dalhousie University, PO Box 15000, Halifax, NS, Canada, B3H 4R2}
\curraddr{}
\email{janssen@mathstat.dal.ca}
\thanks{}

\author{Pawe\l{} Pra\l{}at}
\address{Department of Mathematics, Ryerson University, Toronto, ON, Canada, M5B 2K3}
\curraddr{}
\email{pralat@ryerson.ca}

\author{Rory Wilson}
\address{Department of Mathematics and Statistics, Dalhousie University, PO Box 15000, Halifax, NS, Canada, B3H 4R2}
\curraddr{}
\email{rwilson@mathstat.dal.ca}
\thanks{The authors gratefully acknowledge support from NSERC and MITACS grants.}

\date{}

\begin{abstract}
The spatial preferred attachment (SPA) model is a model for networked information spaces such as domains of the World Wide Web,  citation graphs, and on-line social networks. It uses a metric space to  model the hidden attributes of the vertices. Thus, vertices are elements of a metric space, and link formation depends on the metric distance between vertices. We show, through theoretical analysis and simulation, that for graphs formed according to the SPA model it is possible to infer the metric distance between vertices from the link structure of the graph. Precisely, the estimate is based on the number of common neighbours of a pair of vertices, a measure known as {\sl co-citation}. To be able to calculate this estimate, we derive a precise relation between the number of common neighbours and metric distance.  We also analyze the distribution of {\sl edge lengths}, where the length of an edge is the metric distance between its end points. We show that this distribution has three different regimes, and that the tail of this distribution follows a power law.
\end{abstract}

\keywords{Node similarity, co-citation, bibliographic coupling, link analysis, complex networks, spatial graph model, SPA model} 

\maketitle

\section{Introduction}
Thanks to the World Wide Web and its hyperlinked structure, more and more information is becoming available in the form of a networked information space: a collection of information entities (documents, scientific papers, Web pages, individuals in a social network), connected by links between pairs of entities (references, citations, hyperlinks, ``friend'' relationships). Studies of various networked information spaces  have given convincing evidence that a significant amount of information about the entities represented by the vertex can be derived from the graph representing the link structure. This has led to the application of graph-theoretical techniques to such graphs, with the aim of developing methods to understand the link structure and mine its information. 

An important step in understanding the link structure is the development of a graph model; a stochastic process that models the link formation. The first generation of graph models  was mainly aimed at explaining the graph-theoretical properties observed in real-life networks. In such models, vertices are considered anonymous, and link formation is only influenced by the current link structure. An example is the seminal model by Barab\'{a}si and Albert in~\cite{ba} based on the principle of {\sl preferential attachment}: each new vertex attaches randomly to a prescribed number of existing vertices, with a link probability proportional to the degree, so vertices of high degree are more likely to receive a link from the new vertex. 

In networked information spaces, vertices are not only defined by their link environment, but also by the information entity they represent. More recently, attempts have been made to model this alternative view of the vertices through {\sl spatial models}. In a spatial model, vertices are embedded in a metric space, and link formation is influenced by the metric distance between vertices. The metric space is meant to be like a feature space, so that the coordinates of a vertex in this space represent the information associated with the vertex. For example, in text mining, documents are commonly represented as vectors in a word space.  The metric is chosen so that metric distance represents similarity, i.e.~vertices whose information entities are closely related will be at a short distance from each other in the metric space. 

In this paper, we focus on the Spatial Preferred Attachment (SPA) model, proposed in~\cite{spa1}, and analyze the relationship between the link structure of graphs produced by the model, and the relative positions of the vertices in the metric space. The SPA model generates directed graphs according to the following principle. Vertices are points in a given metric space. Each vertex $v$ has a {\sl sphere of influence}. The volume of the sphere of influence of a vertex is a function of its in-degree. A new vertex $u$ can only link to an existing vertex $v$ if $u$ falls inside the sphere of influence of $v$. In the latter case, $u$ links to $v$ with probability $p$. The SPA model incorporates the principle of preferential attachment, since vertices with a higher in-degree will have a larger sphere of influence.  A model for on-line social networks based on similar principles can be found in~\cite{geo-protean, wosn}.

A number of spatial models have been proposed recently~\cite{Bradonjic09,FFV06,FFV08,Masuda05,HRPrzulj08}.  In these models, as in the SPA model, the relationship between spatial distance and link formation is determined by a threshold function: a link is possible if vertices are within a prescribed threshold distance of each other, and impossible otherwise. However, for these models the threshold distance remains constant throughout the process, and does not depend on the degree, and decrease with time, as in the SPA model. 

A different class of graphs explores the interplay between distance and edge likelihood---with associated graph properties---with more involved mechanisms than simple thresholds: for example, in~\cite{esker}, each new vertex is born with $m$ edges, each joining a neighbour with probability proportional to the in-degrees and a function of the distance between them. Variations include the deterministic model~\cite{oliva} in which edges are formed based on the ``utility'' for the nodes in question, utility incorporating both in-degree and distance; in~\cite{ferretti}, the model demands that the number of nodes per unit volume is constant, and an analysis on the distribution of edge lengths is also included.  Beyond the creation of models,~\cite{krioukov} takes a closer look at the concept of complex networks having an underlying geometry. For a recent survey of spatial models, see~\cite{Janssen10}.

Our first main result shows that, for the SPA model, the number of common in-neighbours between a pair of vertices can, in many cases, be used to estimate the distance between the vertices. Since the metric distance is assumed to represent the similarity or ``closeness'' of the entities represented by the vertices, this means that it is possible to estimate similarity between vertices by looking at the graph only, i.e.~without considering the underlying reality represented by the metric space. The number of common in-neighbours in a citation graph is known in library science as the measure of {\sl  co-citation}, and is one  the earliest measures of graph-based similarity, proposed by Small in 1973 in~\cite{Small73}.  Co-citation, and the related measure of bibliographic coupling (from~\cite{Kessler63}) based on the number of common out-neighbours, are widely used link similarity measures for scientific papers, via the citation graph, for Web pages, and others~\cite{BichtEat80,DeanHenzinger99,Menczer04,LaiWu05}. 

The question of determining similarity between vertices is one that is central to many link mining applications. It is an important tool in searching, by finding documents or Web pages that are similar to a given target document. It can also be used as the basis to identify {\sl  communities}, or clusters, of similar vertices. A purely graph-based measure of similarity  can be used as a complementary indication of similarity between vertices when other information is unreliable (as is often the case in the World Wide Web), largely unavailable (as in some biological networks and online social networks), or protected by privacy laws (as in networks representing phone calls or bank transactions). 

Our result on the relationship between number of common neighbours and metric distance is derived theoretically through an analysis of the SPA model. The analytic result is asymptotic in the size of the graph. In order to test the result on realistic graph sizes, we performed simulations for graphs of 100,000  vertices, with various parameter choices. The simulations show that the real distance and the predicted distance from the number of common neighbours are in very good agreement.  

Our second main result determines the distribution of the {\sl edge lengths}, where the length of an edge is the metric distance between its end points. Edge length is a metric property of a graph feature, and edge length distribution is a combined metric/graph property which is unique to spatial graph models.  In the SPA model, the maximum length of an edge is determined by the size of the sphere of influence of its destination vertex, and this size is determined by the degree of the vertex. Since the degrees follow a power law, we might expect that the edge length distribution follows a power law. We show, both through theoretical results and simulations, that the situation is slightly more complex. In fact, we present clear evidence that, for a certain combination of model parameters, there are three different regimes of the distribution. For the smallest edge lengths, the cumulative edge length distribution is constant: almost all edges fall in this category. In the mid range, we have a power law with coefficient between 0 and 1, and in the tail, we have a power law with exponent greater than 1. 

In Section~2, we describe the SPA model and derive some properties on the degree of a vertex which we will need to establish our results. In Section~3, we give the result on common in-neighbours and metric distance, and present the simulation results. In Section~4, we state our theorem on edge length distribution, and present the edge length distribution as obtained through simulations for various parameters. In Section~5, we give proofs of all the main theorems. 
 
\section{The SPA model}

We start by giving a precise description of the SPA model, and deriving some facts about the degrees of the vertices, which we will need to prove our main results. In~\cite{spa1}, the model is defined for a variety of metric spaces $S$. In this paper, we let $S$ be the unit hypercube in $\Rrr^m$, equipped with the torus metric derived from any of the $L_p$ norms. This means that for any two points $x$ and $y$ in $S$,
\[
d(x,y)=\min \big\{ ||x-y+u||_p\,:\,u\in \{-1,0,1\}^m \big\}.
\] 
The torus metric thus ``wraps around'' the boundaries of the  unit square; this metric was chosen to eliminate boundary effects. Let $c_m$ be the constant of proportionality of volume used with the $m$-th power of the radius in $m$ dimensions, so the volume of a ball of radius $r$ in $m$-dimensional space with the given metric equals $c_m r^m$. For example, for the Euclidean metric, $c_2=\pi$, and for the product metric derived from $L_\infty$, $c_m=2^{m}$. 

The parameters of the model consist of the \emph{link probability} $p\in[0,1]$, and two positive constants $A_1$ and $A_2$. The SPA model generates stochastic sequences of graphs $ (G_{t}:t\geq 0)$, where $G_{t}=(V_{t},E_{t})$, and $V_{t}\subseteq S$. Let $\deg^{-}(v,t)$ be the in-degree of vertex $v$ in $ G_{t}$, and $\deg^+(v,t)$ its out-degree. We define the \emph{sphere of influence} $S(v,t)$ of vertex $v$ at time $t\geq 1$ to be the ball centered at $v$ with volume $|S(v,t)|$ defined as follows:
\begin{equation}\label{above}
|S(v,t)|=\frac{A_1{\deg}^{-}(v,t)+A_2}{t},
\end{equation}
or $S(v,t)=S$ and $|S(v,t)|=1$ if the right-hand-side of (\ref{above}) is greater than 1. 

The process begins at $t=0$, with $G_0$ being the null graph. Time-step $t$, $t\geq 1$, is defined to be the transition between $G_{t-1}$ and $G_t$. At the beginning of each time-step $t$, a new vertex $v_t$ is chosen \emph{uniformly at random} from $S$, and added to $V_{t-1}$ to create $V_{t}$. Next, independently, for each vertex $u\in V_{t-1}$ such that $v_t \in S(u,t-1)$, a directed link $(v_{t},u)$ is created with probability $p$. Thus, the probability that a link $(v_t,u)$ is added in time-step $t$ equals $p\,|S(u,t-1)|$. 

We note that, to avoid the resulting graph becoming too dense, the parameters must be chosen so that $pA_1 < 1$, as explained in~\cite{spa1}.  In this paper, we assume that the parameters meet this condition. Also, the original model as presented in~\cite{spa1} has a third parameter, $A_3$, which is assumed to be zero here.  This causes no loss of generality, since all asymptotic results presented here are unaffected by $A_3$.

We now introduce some more definitions. In the rest of the paper, unless otherwise stated we will assume all asymptotics to refer to $n$ going to infinity, where $n$ is the end time of the growth process, and thus the final size of the graph. (As explained above, Theorem~\ref{thm:exp_degree} is an exception.) We say that an event holds \emph{asymptotically almost surely} (\aas) if the probability that it holds tends to one as $n$ goes to infinity. Similarly, we will use \emph{with extreme probability} (\wep) if the event holds  with probability at least $1-\exp(-\Theta(\log^2 n))$.  Thus, if we consider a polynomial number of events that each holds \wep , then \wep\ all events hold.

It was shown in~\cite{spa1} that the SPA model produces graphs with a power law degree distribution, with exponent $1+1/(pA_1)$. In~\cite{spa_cfp}, the (directed) diameter of the model was investigated. For the results of this paper, we need a precise expression for the expected in-degree of each vertex.

\begin{thm}\label{thm:exp_degree} 
Let $\omega=\omega(t)$ be any function tending to infinity together with $t$. The expected in-degree at time $t$ of a vertex $v_i$ born at time $i \ge \omega$ is given by
\begin{equation}\label{exp_degree}
\Eee(\deg^-(v_i,t)) = (1+o(1))\frac{A_2}{A_1} \left(\frac{t}{i}\right)^{pA_1} - \frac {A_2}{A_1}.
\end{equation}
\end{thm}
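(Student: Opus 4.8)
The plan is to track the expected in-degree of $v_i$ as the growth process evolves and to solve the resulting recurrence. Write $D_t = \E(\deg^{-}(v_i,t))$. At time-step $t$ the new vertex $v_t$ links to $v_i$ exactly when $v_t$ lands in $S(v_i,t-1)$ and the probability-$p$ coin comes up heads, an event of conditional probability $p\,|S(v_i,t-1)|$ given the history $\mathcal F_{t-1}$ up to time $t-1$. Since $|S(v_i,t-1)| = (A_1\deg^{-}(v_i,t-1)+A_2)/(t-1)$ (ignoring for the moment the capping of the volume at $1$), conditioning and then taking expectations via the tower property gives, for $t>i$,
\begin{equation*}
D_t = D_{t-1}\left(1 + \frac{pA_1}{t-1}\right) + \frac{pA_2}{t-1},
\end{equation*}
with initial condition $D_i = 0$, as $v_i$ is born with no in-edges.

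First I would dispose of the capping. The display above uses the uncapped volume, whereas the true volume is the minimum of that quantity and $1$. Since $\min\{x,1\}\le x$, the uncapped recurrence immediately gives an upper bound on $D_t$; for the matching lower bound I would use $\min\{x,1\}\ge x - x\,\mathbf 1[x>1]$ and bound the correction. The cap is active only when $\deg^{-}(v_i,t-1)$ is of order $t$, which, because $pA_1<1$ makes the typical degree $o(t)$ and the degrees obey a power law, is a large-deviation event; a tail estimate for the in-degree shows that $\E(\deg^{-}(v_i,t-1)\,\mathbf 1[\text{cap active}])$ summed over all time-steps contributes only $o(1)$, so the linear recurrence holds up to a negligible additive error.

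Next I would solve the linear recurrence by the integrating-factor method: set $H_t = \prod_{s=i}^{t-1}(1 + pA_1/s)$, the homogeneous solution with $H_i=1$, and substitute $D_t = H_t C_t$. The resulting telescoping yields
\begin{equation*}
D_t = H_t \sum_{s=i+1}^{t} \frac{pA_2}{(s-1)\,H_s}.
\end{equation*}
Two asymptotic estimates remain. For the product, taking logarithms and using $\ln(1+x)=x+O(x^2)$ together with $\sum_{s=i}^{t-1}1/s = \ln(t/i)+O(1/i)$ gives $H_t = (1+o(1))(t/i)^{pA_1}$, where the hypothesis $i\ge\omega$ with $\omega\to\infty$ is precisely what forces the $O(1/i)$ error to be $o(1)$. (Alternatively one writes $H_t = \Gamma(t+pA_1)\Gamma(i)/(\Gamma(t)\Gamma(i+pA_1))$ and invokes $\Gamma(x+a)/\Gamma(x)\sim x^a$.) Substituting the same estimate for $H_s$ inside the sum and comparing the sum with $\int_i^t s^{-1-pA_1}\,ds = (pA_1)^{-1}(i^{-pA_1}-t^{-pA_1})$ yields
\begin{equation*}
\sum_{s=i+1}^{t}\frac{pA_2}{(s-1)\,H_s} = (1+o(1))\frac{A_2}{A_1}\left(1 - \left(\frac{i}{t}\right)^{pA_1}\right).
\end{equation*}
Multiplying by $H_t$ collapses the product $(t/i)^{pA_1}(i/t)^{pA_1}$ to $1$, producing the claimed $(1+o(1))\frac{A_2}{A_1}(t/i)^{pA_1}-\frac{A_2}{A_1}$.

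The main obstacle I anticipate is not the algebra of the recurrence but the rigorous bookkeeping of the error terms: keeping the $o(1)$ uniform across the whole summation range, and, above all, justifying the removal of the volume cap, which is the one genuinely nonlinear feature of the model. Once the cap is controlled, everything downstream is a careful but routine comparison of sums with integrals.
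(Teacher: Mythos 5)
Your proposal is correct and takes essentially the same route as the paper's proof: both derive the one-step recurrence for the expected in-degree from the link probability $p\,|S(v_i,t-1)|$ and then extract the asymptotics $(1+o(1))(t/i)^{pA_1}$ of the product $\prod_{j=i}^{t-1}\left(1+pA_1/j\right)$, using $i\ge\omega\to\infty$ to kill the error term. The only differences are cosmetic or a matter of rigor: the paper homogenizes the recurrence via the substitution $X(v_i,t)=\deg^-(v_i,t)+A_2/A_1$, so the solution is a single product and your variation-of-constants sum (with its sum-versus-integral comparison) is avoided entirely; and the paper silently works with the uncapped volume, whereas you explicitly argue the cap contributes negligibly --- a point on which your write-up is actually more careful than the paper's.
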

\begin{proof}
In order to simplify calculations, we make the following substitution:
\begin{equation}\label{eq:subst}
X(v_i,t) = \deg^-(v_i,t) + \frac{A_2}{A_1}.
\end{equation}
It follows immediately from the definition of the process that
$$
X(v_i,t+1) =
\begin{cases}
      X(v_i,t)+1, & \text{with probability $\frac{pA_1X(v_i,t)}{t}$}\\
      X(v_i,t), & \text{otherwise}.
   \end{cases}
$$
Therefore,
\begin{align*}
\mathbb{E}(X(v_i,t+1)~~|~~X(v_i,t)) &= (X(v_i,t)+1)\frac{pA_1X(v_i,t)}{t} + X(v_i,t)\left(1-\frac{pA_1X(v_i,t)}{t}\right)\\
&=X(v_i,t)\left(1+\frac{pA_1}{t}\right), 
\end{align*}
and so
$$
\mathbb{E}(X(v_i,t+1)) = \mathbb{E}(X(v_i,t)) \left( 1+\frac{pA_1}{t} \right).
$$
Since all vertices start with in-degree zero, $X(v_i,i) = \frac{A_2}{A_1}$. Since $i \ge \omega$, one can use this to get
\begin{eqnarray*}
\mathbb{E}(X(v_i,t)) &=& \frac{A_2}{A_1}\prod_{j=i}^{t-1} \left( 1+\frac{pA_1}{j} \right) \\
&=& (1+o(1)) \frac{A_2}{A_1} \exp\left(\sum_{j=i}^{t-1}\frac{pA_1}{j}\right)\\
&=& (1+o(1)) \frac{A_2}{A_1} \exp\left(pA_1 \log \left( \frac {t}{i} \right) \right) \\
&=& (1+o(1))\frac{A_2}{A_1} \left(\frac{t}{i}\right)^{pA_1},
\end{eqnarray*}
and the assertion follows from~(\ref{eq:subst}).
\end{proof}

Theorem~\ref{thm:exp_degree} states that the \emph{expected} in-degree of an individual vertex born at time $i$ is asymptotically equal to  $\frac{A_2}{A_1} \left(\frac{t}{i}\right)^{pA_1} - \frac {A_2}{A_1}$, with an error term of order  $o((t/i)^{pA_1})$. (The asymptotics assume that $t$ is going to infinity, and $i$ is a growing function of $t$.) However, the in-degree of an individual vertex is not concentrated around its expected value. This is due to variation happening shortly after birth; whether or not the vertex receives in-links in the first few time steps after its birth greatly affects the size of its sphere of influence throughout the process, and therefore its final in-degree. 

We can circumvent this difficulty by considering the final in-degree of the vertex, and infer the growth history of the in-degree from there. Namely, from the in-degree of the vertex at end time $n$, we can obtain sharp bounds on the in-degree of the vertex during most of the process. This is expressed in the following theorem. First, define a injective function $f:\Rrr\rightarrow \Rrr$ by  
\[
f(i)=\frac{A_2}{A_1} \left(\frac{n}{i}\right)^{pA_1},
\] 
so $f(i)$ is the expected in-degree, at time $n$, of a vertex born at time $i$ (up to a multiplicative factor of $(1+o(1))$). Thus $f^{-1}(k)$ is the birth time of a vertex of final in-degree $k$, had the in-degree of the vertex remained close to its expected value during its entire lifetime. Moreover, the (asymptotic) expected in-degree at time $t$ of a vertex born at time $i$ can be given as $(A_2/A_1) f(i)/f(t)$ (provided that $i=i(n)$ tends to infinity with $n$). Thus, if a vertex of final in-degree $k$ has in-degree growth close to its expected value, then 
$$
t=f^{-1} \left(\frac{A_2k}{A_1a} \right)
$$ 
will be the approximate time when that vertex has in-degree $a$. The precise statement and proof of this discussion follows below in Theorem~\ref{thm:deg}, the main result of this section.

\begin{thm}\label{thm:deg}
Let $\omega=\omega(n)$ be any function tending to infinity together with $n$. The following statement holds \aas\ for every vertex $v$ for which $\deg^-(v,n)=k=k(n) \ge \omega \log n$. Let $i=f^{-1}(k)$, and let 
$$
t_k=f^{-1}\left( \frac{A_2k}{A_1\omega \log n} \right).
$$
Then, for all values of $t$ such that $t_k \le t \le n$, 
\begin{equation}
\label{eqn:deg}
\deg^-( v,t)=(1+o(1))\frac{A_2}{A_1}\left(\frac{t}{i}\right)^{pA_1}=(1+o(1)) \frac{A_2}{A_1} \cdot \frac{k}{f(t)}=(1+o(1)) k \left(\frac{t}{n}\right)^{pA_1}.
\end{equation}
\end{thm}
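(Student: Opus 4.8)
The plan is to build on the substitution $X(v,t)=\deg^-(v,t)+A_2/A_1$ introduced in the proof of Theorem~\ref{thm:exp_degree}, which turns the in-degree into an essentially multiplicative process: given the past, $X(v,t)$ increases by $1$ with probability $pA_1X(v,t)/t$ and is otherwise unchanged. Writing $\Pi_{s,t}=\prod_{j=s}^{t-1}(1+pA_1/j)=(1+o(1))(t/s)^{pA_1}$, the same one-line computation shows that, once a starting time $s$ is fixed, $M_t:=X(v,t)/\Pi_{s,t}$ is a martingale. So the whole theorem reduces to showing that this martingale moves by at most a factor $(1+o(1))$ over the relevant window. First I would note that the three right-hand sides of \eqref{eqn:deg} are in fact \emph{identically} equal (using $f(i)=k$ and the definition of $f$), so it suffices to prove $\deg^-(v,t)=(1+o(1))\,k\,(t/n)^{pA_1}$.

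The structural idea is to run the argument \emph{forward} from the first time the vertex becomes ``heavy'', instead of conditioning on the final degree $k$ (which would produce an awkward Markov bridge). Fix $v$ and let $\tau$ be the first time $t$ with $\deg^-(v,t)\ge\frac12\om\log n$; since the increments are $0$ or $1$ we have $X(v,\tau)=(1+o(1))\frac12\om\log n$. By the strong Markov property of the single-vertex degree process, conditioned on $\tau$ and $X(v,\tau)$ the martingale $M_t=X(v,t)/\Pi_{\tau,t}$ for $t\ge\tau$ starts from $X(v,\tau)$, and I would control its fluctuations with a variance-sensitive martingale inequality (Freedman's inequality). The increments obey $|M_{t+1}-M_t|\le 1/\Pi_{\tau,t+1}$, while the predictable quadratic variation satisfies $\sum_t\Eee\big[(M_{t+1}-M_t)^2\mid\mathcal F_t\big]\le\sum_t pA_1 X(v,t)/(t\,\Pi_{\tau,t+1}^2)$. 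Substituting the trajectory estimate $X(v,t)\approx M_\tau\Pi_{\tau,t}$ and summing the resulting geometric-type series collapses the total variation to $W=\Theta(X(v,\tau))=\Theta(\om\log n)$ and keeps the maximal increment of order one, so that Freedman yields failure probability at most $\exp(-\Theta(\eps^2\om\log n))$ for the event $\sup_{t\ge\tau}|M_t-M_\tau|>\eps\,M_\tau$.

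To make the variance bound rigorous I would introduce the stopping time $\theta=\min\{t\ge\tau:|M_t-M_\tau|>\eps M_\tau\}$, estimate the quadratic variation only up to $\theta$ (where $X(v,t)\le(1+\eps)M_\tau\Pi_{\tau,t}$ holds by construction), and apply Freedman to the stopped martingale; this standard bootstrap removes the circular use of the conclusion inside the variance estimate. Choosing $\eps=\eps_n\to0$ slowly enough that $\eps_n^2\om\to\infty$ (possible precisely because $\om\to\infty$), the per-vertex failure probability is $\exp(-\Theta(\eps_n^2\om\log n))=o(1/n)$, so a union bound over the at most $n$ vertices gives the statement \aas. Finally I would translate back: $M_t=(1+o(1))M_\tau$ for all $t\ge\tau$ gives $\deg^-(v,t)=(1+o(1))X(v,\tau)\Pi_{\tau,t}$; evaluating at $t=n$ identifies $X(v,\tau)\Pi_{\tau,n}=(1+o(1))k$, whence $\deg^-(v,t)=(1+o(1))k\,(t/n)^{pA_1}$. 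The same evaluation places $\tau$ within a constant factor of $t_k$ with $\tau\le t_k$ (this is why the first-passage level is taken below $\om\log n$), so the window $t_k\le t\le n$ lies inside $[\tau,n]$ and is covered.

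The step I expect to be the main obstacle is the variance control combined with the choice of error parameter. Plain Azuma--Hoeffding is too weak: the maximal increments $1/\Pi_{\tau,t+1}$ are of order one near $\tau$ and their squares sum to a polynomially large quantity, so only a variance-sensitive estimate---in which the quadratic variation collapses to $\Theta(\om\log n)$---produces a usable exponent. Extracting a relative error that is genuinely $o(1)$ while keeping the failure probability $o(1/n)$ for an \emph{arbitrary} slowly growing $\om$ is exactly what forces the simultaneous requirements $\eps_n\to0$ and $\eps_n^2\om\to\infty$; this quantitative balance is the delicate heart of the proof, whereas the remaining bookkeeping (overshoot at $\tau$, the gap between $\tau$ and $t_k$, and the equivalence of the three formulas) is routine.
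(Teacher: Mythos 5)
Your proposal is correct, and it shares the broad strategy of the paper's proof --- run the degree process \emph{forward} from the first time the vertex becomes heavy, use a stopping time to break the circularity in the variance estimate, apply a variance-sensitive exponential inequality, and finish with a union bound over vertices --- but the technical execution is genuinely different. The paper first proves a doubling-window lemma (Theorem~\ref{thm:conc}): conditional on $\deg^-(v,T)=d\ge\om\log n$, the in-degree stays within $d(t/T)^{pA_1}\pm\frac{2}{pA_1}\frac{t}{T}\sqrt{d\log n}$ for all $T\le t\le 2T$ with probability $1-O(n^{-4/3})$; its proof introduces the stopping time $T_0$ at which the bound first fails, stochastically dominates the arrivals before $T_0$ by a sum of \emph{independent} Bernoulli indicators, and applies the Bernstein bound of Lemma~\ref{lem:Chernoff}. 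Theorem~\ref{thm:deg} is then deduced by iterating this lemma over $O(\log n)$ dyadic windows $[T,2T],[2T,4T],\dots$ and verifying, via a geometric-series estimate, that the accumulated multiplicative error $\prod_i\bigl(1+O(\sqrt{d^{-1}2^{-pA_1 i}\log n})\bigr)$ is still $1+o(1)$. You instead work in a single shot over the whole window $[\tau,n]$ with the normalized martingale $M_t=X(v,t)/\Pi_{\tau,t}$ and Freedman's inequality; the geometric collapse that the paper realizes \emph{across} windows (errors shrinking as the degree doubles) appears in your argument \emph{inside} the quadratic-variation sum $\sum_t pA_1X(v,t)/(t\,\Pi_{\tau,t+1}^2)=\Theta(\om\log n)$. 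What your route buys is the elimination of the dyadic bookkeeping and the error-composition step, and it makes explicit why plain Azuma--Hoeffding cannot work; what the paper's route buys is that it needs no martingale machinery at all --- after the stopping-time domination it invokes only the independent-sum inequality it has already stated as Lemma~\ref{lem:Chernoff}. The delicate points in your write-up all check out: your stopped-martingale bootstrap is exactly the same device as the paper's $T_0$; the choice $\eps_n\to 0$ with $\eps_n^2\om\to\infty$ gives per-vertex failure probability $\exp(-\Theta(\eps_n^2\om\log n))=o(1/n)$ (the paper's analogue is $O(n^{-4/3}\log n)$ per vertex); and placing the first-passage level at $\tfrac12\om\log n$ correctly yields $\tau=(1+o(1))\,2^{-1/(pA_1)}\,t_k<t_k$, so the required window $[t_k,n]$ is indeed covered.
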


The theorem implies that once a given vertex accumulates $\omega \log n$ in-neighbours, the rest of the process (until time-step $n$) can be predicted with high probability; in fact, \aas\ we get a concentration around the expected value. Let us mention that it seems that the $\omega$ factor is needed to get a concentration result. However, without this factor, the order of the in-degrees still can be predicted: once the vertex has $\log n$ in-neighbours, we can bound the in-degree of this vertex so that the ratio between upper and lower bounds would \aas\ be a constant.

In order to prove Theorem~\ref{thm:deg}, we need strong results on the concentration of the in-degree throughout the process. These results, and the proof of the theorem, are given in Section~\ref{proofs}.

\section{Number of common neighbours and spatial distance}

The principles of the SPA model make it plausible that vertices that are close together in space will have a relatively high number of common neighbours. Namely, if two vertices are close together, their spheres of influence will overlap during most of the process, and any new vertex falling in the intersection of both spheres has the potential to become a common neighbour. Thus, the number of common neighbours (co-citation) should lead to a reliable measure of closeness in the metric space. In this section, we will quantify the relation between spatial distance and number of common in-neighbours, and show how it can be used to estimate distance. 

The term ``common neighbour" here refers to common in-neighbours. Precisely, a vertex $w$ is a common neighbour of vertices $u$ and $v$ if there exist directed links from $w$ to $u$ and from $w$ to $v$. Note that in our model this can only occur if $w$ is younger than $u$ and $v$, and, at its birth, $w$ lies in the intersection of the spheres of influence of $u$ and $v$. We use $cn(u,v,t)$ to denote the number of common in-neighbours of $u$ and $v$ at time $t$. 

Theorem~\ref{thm:cn} distinguishes three cases. The division into cases is based on the trend, as shown in Theorem~\ref{thm:deg}, that spheres of influence tend to shrink over time.  Thus, once the spheres of influence of two vertices have become disjoint, and their boundaries have some distance between them, it is not likely that they will overlap at any time after that. The cases therefore are distinguished by how the spheres of influence of $u$ and $v$ overlap, and when or whether they become disjoint. Figure~\ref{fig:cases} gives a pictorial representation of the three cases.  Consider two vertices $u$ and $v$ so that $v$ has smaller in-degree at time $n$ than $u$. Thus, the sphere of influence of $v$ tends to be smaller than that of $u$, and the likely birth time of $u$ is before that of $v$. 

\begin{figure}
\begin{center}
    \begin{tabular}{ | m{1cm} || p{2cm}| p{2cm} | c | }
    \hline
    \textbf{Case} & \textbf{Near birth of $v$}  & \textbf{Near end of process} & \\ \hline \hline
    
    1 &

\begin{tikzpicture}[scale=0.4]
\draw[step=0.5cm,color=gray] (-3,-3) grid (2,3);
\draw [fill=gray!50] (-0.5,0) circle (1.2cm);
\fill (-0.5,0) circle (2pt) node[below left] {$u$};
\draw [fill=gray!30] (1.25,2.25) circle (0.6cm);
\fill (1.25,2.25) circle (2pt) node[below right] {$v$};
\end{tikzpicture}
		& 

\begin{tikzpicture}[scale=0.4]
\draw[step=0.50cm,color=gray] (-3,-3) grid (2,3);
\draw [fill=gray!50] (-0.5,0) circle (1cm);
\fill (-0.5,0) circle (2pt) node[below left] {$u$};
\draw [fill=gray!30] (1.25,2.25) circle (0.3cm);
\fill (1.25,2.25) circle (2pt) node[below right] {$v$};
\end{tikzpicture}
 &
Too far
		\\ \hline
    2 &
\begin{tikzpicture}[scale=0.4]
\draw[step=0.50cm,color=gray] (-3,-3) grid (2,3);
\draw [fill=gray!50] (-0.5,0) circle (2.5cm);
\fill (-0.5,0) circle (2pt) node[below left] {$u$};
\draw [fill=gray!30] (.4,.4) circle (0.6cm);
\fill (.4,.4) circle (2pt) node[below right] {$v$};
\end{tikzpicture}
    & 
\begin{tikzpicture}[scale=0.4]
\draw[step=0.50cm,color=gray] (-3,-3) grid (2,3);
\draw [fill=gray!50] (-0.5,0) circle (2.1cm);
\fill (-0.5,0) circle (2pt) node[below left] {$u$};
\draw [fill=gray!30] (.4,.4) circle (0.25cm);
\fill (.4,.4) circle (2pt) node[below right] {$v$};
\end{tikzpicture}
&
Too close
     \\ \hline
    3 & 
\begin{tikzpicture}[scale=0.4]
\draw[step=0.50cm,color=gray] (-3,-3) grid (2,3);
\draw [fill=gray!50] (-0.5,0) circle (2.3cm);
\fill (-0.5,0) circle (2pt) node[below left] {$u$};
\draw [fill=gray!30] (1,1) circle (0.4cm);
\fill (1,1) circle (2pt) node[below right] {$v$};
\end{tikzpicture}
    & 
\begin{tikzpicture}[scale=0.4]
\draw[step=0.50cm,color=gray] (-3,-3) grid (2,3);
\draw [fill=gray!50] (-0.5,0) circle (1.5cm);
\fill (-0.5,0) circle (2pt) node[below left] {$u$};
\draw [fill=gray!30] (1,1) circle (0.20cm);
\fill (1,1) circle (2pt) node[below right] {$v$};
\end{tikzpicture}
& Just right
 \\
    \hline \hline
    \end{tabular}
\end{center}
\caption{The three cases of Theorem~\ref{thm:cn}\label{fig:cases}}
\end{figure}

In Case~1, $u$ and $v$ are so far apart that their spheres of influence never overlap, except maybe for a negligible initial time period near their birth. In this case, no vertex can fall in the spheres of influence of both $u$ and $v$, and thus $u$ and $v$ will acquire no common neighbours after the initial time period. Thus, they will have negligibly few common neighbours. In this case again, accurate prediction of the spatial distance between $u$  and $v$ is not possible: if $u$ and $v$ have very few common neighbours, we can only give a lower bound on their distance. 

In Case~2, $u$ and $v$ are so close that the sphere of influence of $v$ is contained within the sphere of influence of $u$ for almost all of its existence. In this case, the number of common neighbours of $u$ and $v$ is a constant proportion of the degree of $v$, due to the fact that each new vertex linking to $v$ will automatically be within the sphere of influence of $u$, and thus can link to $u$ as well (and does so with  probability $p$.) This means that $u$ and $v$ are too close for accurate prediction: if $cn(u,v,n)$ and $\deg^{-}(v,n)$ differ by a factor close to $p$ we can only give an upper bound on the spatial distance between $u$ and $v$.

In Case~3, the sphere of influence of $v$ is contained in that of $u$ near the birth of $v$, but the spheres become disjoint before the end of the process. The moment at which the separation occurs can be determined fairly precisely, and depends heavily on the distance between $u$ and $v$. Thus, for this case we have a formula for the number of common neighbours which involves the distance between $u$ and $v$, and the in-degree of both $u$ and $v$ at the end of the process. Reversing the formula, we can obtain a reliable estimate for the distance between $u$ and $v$ from the observable graph properties $cn(u,v,n)$, $\deg^{-}(u)$ and $\deg^{-}(v)$.

\begin{thm}\label{thm:cn}
Let $\omega=\omega(n)$ be any function tending to infinity together with $n$. The following holds \aas\
Let  $v_k$ and $v_\ell$ be vertices such that 
$$
k=\deg(v_k,n) \geq \deg(v_\ell,n)=\ell \ge \omega^2 \log n
$$
in a graph generated by the SPA model. Let $d=d(v_k,v_\ell)$ be the distance between $v_k$ and $v_\ell$ in the metric space. Finally, let $T = f^{-1}(\ell / (\omega \log n))$. 
Then,
\begin{enumerate}
\item[Case 1.] If $d \ge \eps (\omega \log n / T)^{1/m}$ for some $\eps > 0$, then 
$$
cn(v_\ell,v_k,n)=O(\omega \log n).
$$

\smallskip
\item[Case 2.] If $k \ge (1+\eps) \ell$ for some $\eps>0$ and 
\begin{equation}\label{eq:cond_for_d}
d \le \left( \frac {A_1 k+A_2}{c_m n} \right)^{1/m} - \left( \frac {A_1 \ell + A_2}{c_m n} \right)^{1/m} = \Theta \left(\left( \frac {k}{n} \right)^{1/m} \right),
\end{equation}
then  
$$
cn(v_\ell,v_k,n) = (1+o(1))p \ell.
$$ 

If $k = (1+o(1)) \ell$ and $d \ll (k/n)^{1/m} = (1+o(1)) (\ell/n)^{1/m}$, then $cn(v_\ell,v_k,n) = (1+o(1))p \ell$ as well.

\smallskip
\item[Case 3.] If $k \ge (1+\eps) \ell$ for some $\eps>0$ and 
\begin{equation}\label{eq:cond_for_d2}
\left( \frac {A_1 k+A_2}{c_m n} \right)^{1/m} - \left( \frac {A_1 \ell + A_2}{c_m n} \right)^{1/m} < d \ll  (\omega \log n / T)^{1/m},
\end{equation}
then  
\begin{equation}
\label{eq:cn}
cn(v_\ell,v_k,n) = C i_k^{-\frac{(pA_1)^2}{1-pA_1}} i_\ell^{-pA_1} d^{-\frac{mpA_1}{1-pA_1}} \left( 1+O \left( \left( \frac{i_k}{i_\ell} \right)^{pA_1/m} \right) \right),
\end{equation}
where $i_k=f^{-1}(k)$ and $i_\ell=f^{-1}(\ell)$ and $C=p A_1^{-1} A_2^{\frac{1}{1-pA_1}} c_m^{-\frac{pA_1}{1-pA_1}}$.

If $k=(1+o(1)) \ell$ and $\eps (k/n)^{1/m} < d \ll  (\omega \log n / T)^{1/m}$ for some $\eps>0$, then
$$
cn(v_\ell,v_k,n) = \Theta \left( i_k^{-\frac{(pA_1)^2}{1-pA_1}} i_\ell^{-pA_1} d^{-\frac{mpA_1}{1-pA_1}} \right).
$$ 
\end{enumerate}
\end{thm}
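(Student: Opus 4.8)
The plan is to count common in-neighbours one birth-time at a time. A vertex born at time $t$ becomes a common neighbour of $v_k$ and $v_\ell$ exactly when it falls into $S(v_k,t-1)\cap S(v_\ell,t-1)$ and then, independently, sends a link to each of them, an event of conditional probability $p^2\,|S(v_k,t-1)\cap S(v_\ell,t-1)|$. So the first step is to estimate
\[
\Eee\big(cn(v_\ell,v_k,n)\big)=p^2\sum_{t} \big|S(v_k,t-1)\cap S(v_\ell,t-1)\big|,
\]
and the entire problem reduces to understanding how this intersection volume behaves as a function of $t$. As a consistency check, in the pure-containment situation the intersection equals $|S(v_\ell,t)|$ and $p^2\sum_t|S(v_\ell,t)|=p\cdot\Eee(\deg^-(v_\ell,n))$, already explaining the factor $p$ (not $p^2$) in the answers.

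I would then feed in Theorem~\ref{thm:deg}. \aas, for all $t$ down to order $T$ the in-degrees satisfy $\deg^-(v_k,t)=(1+o(1))k(t/n)^{pA_1}$ and likewise for $v_\ell$, so the radii are $r_k(t),r_\ell(t)=\Theta\big(t^{(pA_1-1)/m}\big)$, both decreasing, with the scale-invariant ratio $r_k(t)/r_\ell(t)=(1+o(1))(k/\ell)^{1/m}$ \emph{constant} in $t$. The geometry is therefore governed by two critical times: a \emph{containment time} $t_1$ solving $r_k(t_1)-r_\ell(t_1)=d$, before which the smaller ball lies inside the larger, and a \emph{separation time} $t_2$ solving $r_k(t_2)+r_\ell(t_2)=d$, after which the balls are disjoint. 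The intersection volume is $|S(v_\ell,t)|$ for $t<t_1$, a lens volume dropping from $|S(v_\ell,t)|$ to $0$ for $t_1<t<t_2$, and $0$ for $t>t_2$. The three cases are precisely the three placements of the window $[t_1,t_2]$ relative to the interval $[\Theta(T),n]$, and I would replace the sum by an integral throughout.

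Evaluating each regime: in Case~2 containment persists past $n$ (so $t_1>n$), the intersection is $|S(v_\ell,t)|$ throughout, and the sum collapses to $(1+o(1))p\ell$. In Case~3 containment breaks at some $t_1<n$; the dominant term is $p^2\int^{t_1}|S(v_\ell,t)|\,dt=(1+o(1))A_1\ell n^{-pA_1}\int^{t_1}t^{pA_1-1}\,dt=(1+o(1))p\ell(t_1/n)^{pA_1}$, and substituting the value of $t_1$ from $r_k(t_1)-r_\ell(t_1)=d$, then converting $k,\ell$ to $i_k=f^{-1}(k)$, $i_\ell=f^{-1}(\ell)$, produces the closed form~(\ref{eq:cn}) with the constant $C$. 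Here the approximation $k^{1/m}\pm\ell^{1/m}=k^{1/m}\big(1+O((\ell/k)^{1/m})\big)$ forces $t_1$ and $t_2$ to agree up to a factor $1+O((\ell/k)^{1/m})=1+O((i_k/i_\ell)^{pA_1/m})$: when $k\gg\ell$ the lens window is negligible and the sharp multiplicative error holds, whereas when $k=\Theta(\ell)$ it is a constant factor and only the order $\Theta(\cdot)$ survives. In Case~1 I would split at $T$: vertices born before $T$ contribute at most $\deg^-(v_\ell,T)=O(\omega\log n)$ by Theorem~\ref{thm:deg}, while the hypothesis $d\ge\eps(\omega\log n/T)^{1/m}$ makes $d$ at least a fixed multiple of $v_\ell$'s radius at time $T$, and one argues the residual post-$T$ overlap is too small to yield more than $O(\omega\log n)$ further common neighbours.

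The delicate part is upgrading these expectation estimates to \aas\ statements, and this is where I expect the main obstacle. The radii $r_k(t),r_\ell(t)$ are themselves random and depend on the very link-formation events being counted, so the common-neighbour count is not literally a sum of independent indicators; the intended resolution is to condition on the \aas\ two-sided control of both in-degrees from Theorem~\ref{thm:deg} (valid for all $t\ge\Theta(T)$), which pins the radii to within a $(1+o(1))$ factor and decouples the geometry from the residual link randomness, and then to apply a Chernoff/Azuma bound to the conditionally near-independent contributions, using $\ell\ge\omega^2\log n$ to guarantee the expected count dominates its fluctuations. The un-concentrated initial phase $t<T$ is absorbed into the deterministic $O(\omega\log n)$ error. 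The two hardest technical points are the partial-overlap (lens) integral over $(t_1,t_2)$, where one must show it perturbs the answer only at the constant/$(1+o(1))$ level in the manner described above, and the Case~1 post-$T$ bound, where the distance hypothesis must be converted into a genuine smallness of the overlapping volume after $v_\ell$'s sphere has shrunk.
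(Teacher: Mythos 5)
Your proposal is correct and follows essentially the same route as the paper's proof: both rest on Theorem~\ref{thm:deg} to pin the radii after time $T$, split into the same three cases governed by the same two critical times (your $t_1,t_2$ are exactly the paper's $t^-,t^+$), reduce the expected count to $p\deg^-(v_\ell,\cdot)$ evaluated at those times---your intersection-volume integral is this computation in disguise---and finish with the same Chernoff-type concentration for (conditionally) independent indicators. The differences are presentational, not substantive.
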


The importance of the theorem is that~(\ref{eq:cn})  gives a relationship between the distance between the vertices, their number of common neighbours, and their degrees. Since the number of common neighbours and the degrees are observable from the graph, the equation allows us to obtain an estimate for the (spatial) distance between the vertices using only basic graph parameters.

We tested the predictive power of our theoretical results on data obtained from simulations. The data was obtained from a graph with  100,000 vertices. The graph was generated from points randomly distributed in the unit square in $\Rrr ^2$ according to the SPA model described in Section~2, with $n=100,000$ and $p=0.95$, and $A_1=A_2=1$. 

First of all, we show that a blind approach to using the co-citation measure (number of common neighbours) does not work.  In Figure~\ref{fig:raw} we plot spatial distance versus number of common neighbours without further processing.  No relation between the two is apparent. 

\begin{figure}[ht]
\centerline{\includegraphics[width=0.45\textwidth]{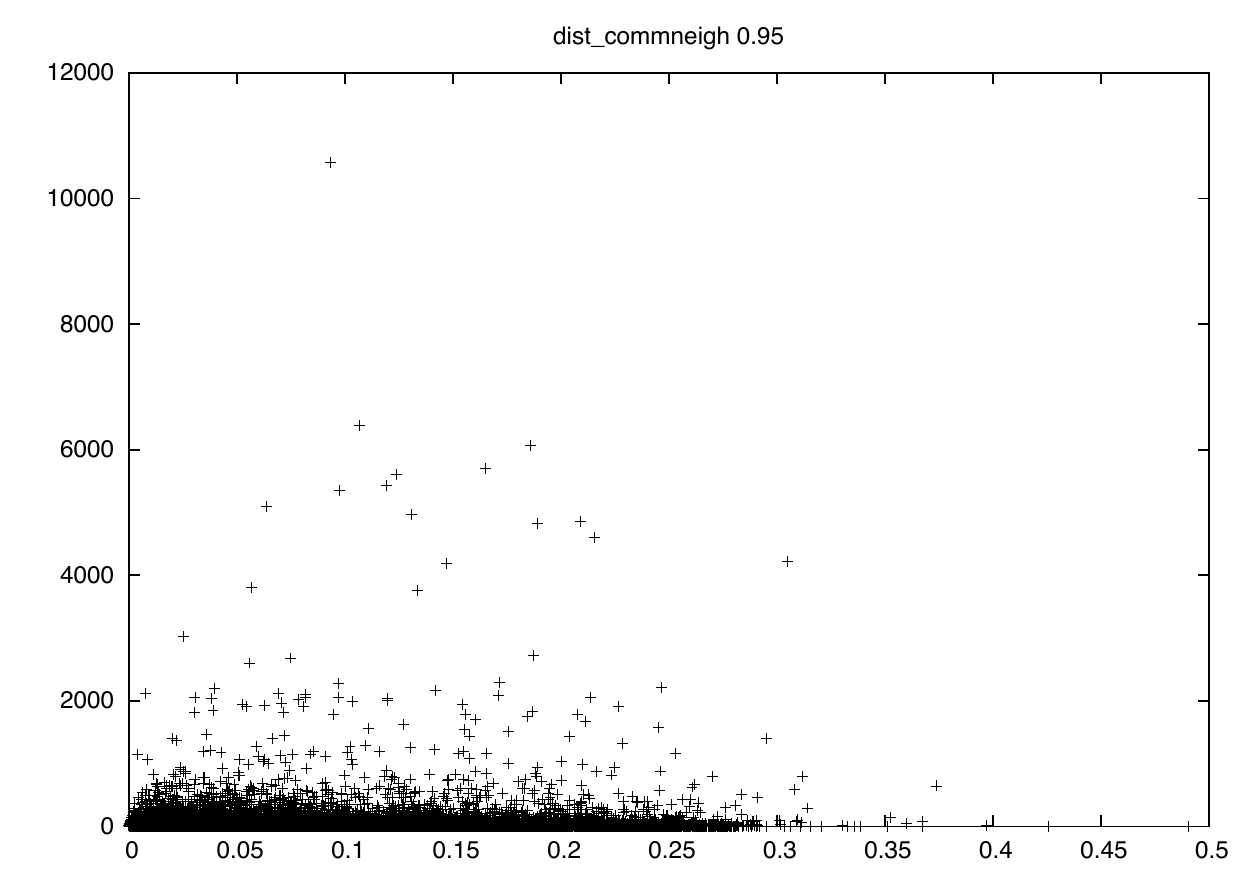}}
\caption{Actual distance vs.~number of common neighbours.\label{fig:raw}}
\end{figure}

Next, we apply  Theorem~\ref{thm:cn} to estimate the spatial distance between two vertices, based on the number of common neighbours of the pair.  (The spatial distance is actual distance between the point in the metric space, which for our simulation is the distance obtained from the Euclidean torus metric on  the unit square.) From Cases 1 and 2, we can only obtain a lower and upper bound on the distance, respectively. In order to eliminate Case 1 (too far), we consider only pairs that have at least 20 common neighbours. This reduces the data to  19,200 pairs. For pairs of vertices in Case 2 (too close), the number of common neighbours equals $p$ times the lowest degree of the pair. In order to eliminate this case,  we require that the number of common neighbours should be less than $p/2$ times the lowest degree of the pair. This reduces the data set to  2,400 pairs.  We expect these pairs mainly to be in Case 3. 

For pairs in Case 3, we can derive an estimate of the distance. Consider two such vertices $v_\ell$ and $v_k$, with final in-degree $\ell$ and $k$, respectively. We base our estimate on Equation~\ref{eq:cn}, where we ignore the multiplicative  $(1+O((\frac{i_k}{i_\ell})^{pA_1/m})$) error term. Namely, when $k$ and $\ell$ are of the same order, then this expression is the average of the lower and upper bound as derived in the proof of the theorem, and when  $\ell\ll k$ the term is asymptotically negligible. The estimated distance $\hat{d}$ between nodes $v_\ell$ and $v_k$, given that their number of common neighbours equals $N$, is then given by 

\begin{equation*}
\hat{d}=C'i_k^{-\frac{pA_1}{m}} i_\ell^{-\frac{1-pA_1}{m}} N^{-\frac{1-pA_1}{mpA1}} ,
\end{equation*}
where $i_k=f^{-1}(k)$ and $i_\ell=f^{-1}(\ell)$ and $C'=(p/A_1)^{\frac{1-pA_1}{mpA_1}} A_2^{\frac{1}{mpA_1}} c_m^{-\frac{1}{m}}$. 

Figure~\ref{fig:degree} shows actual vs.~estimated distance for these pairs.  The estimated distance (on the $y$-axis), is computed using only data obtainable from the graph: the in-degrees of both vertices, and their number of common neighbours. This is compared to the actual distance (on the $x$-axis), known from the simulation. We see almost perfect agreement between estimate and reality. 

\begin{figure}[ht]
\centerline{\includegraphics[width=0.4\textwidth]{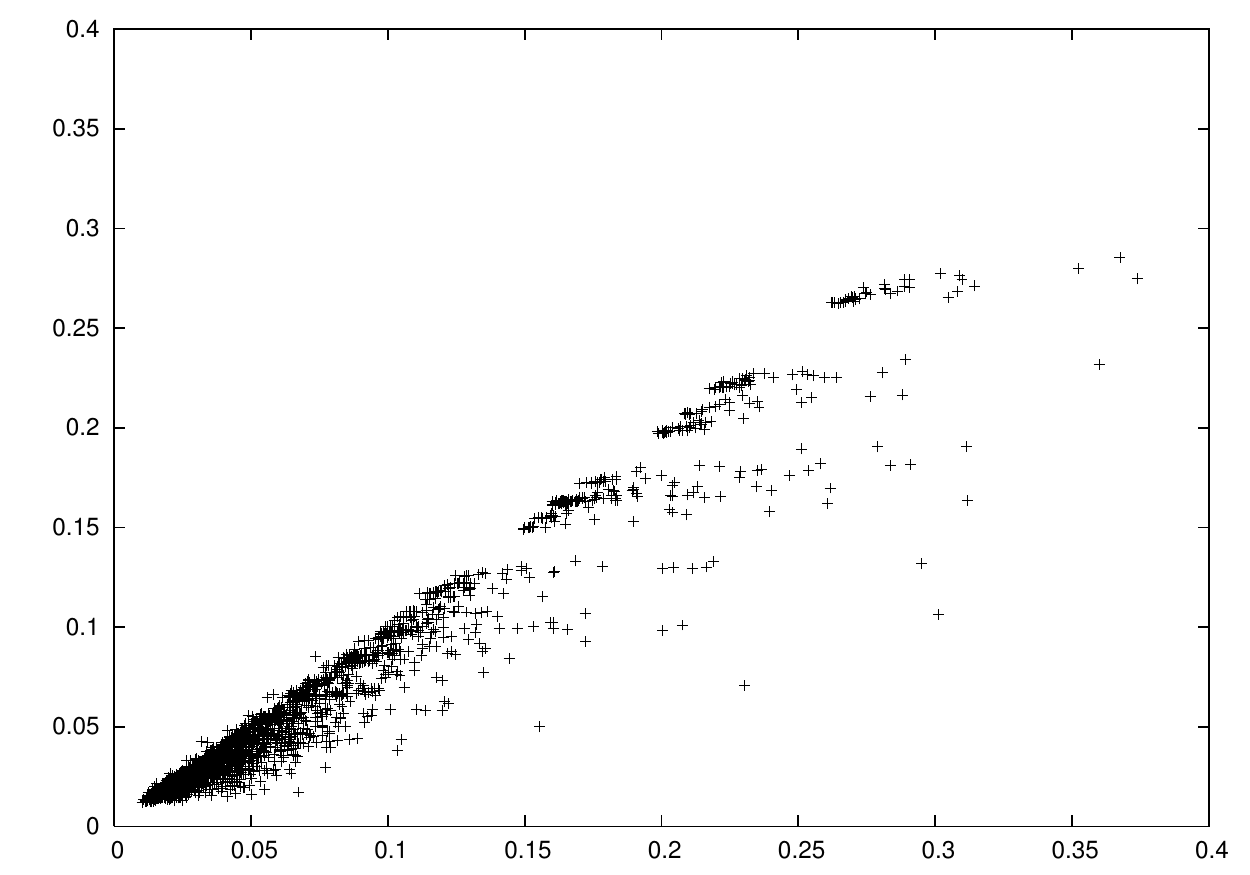}}
\caption{Actual distance ($x$-axis) vs.~estimated distance ($y$-axis) for eligible pairs from simulated data, calculated using the in-degree of both vertices.\label{fig:degree}}
\end{figure}

The figure shows that the scatter away from the diagonal is confined to points below the diagonal. This means that, for the corresponding pairs, the estimate $\hat{d}$ is lower than the actual distance. This is due to the choice to base our estimate on the average between the lower bound obtained from $t^-$, the estimated time when the sphere of influence of $v_\ell$ first touches the boundary of the sphere of influence of $v_k$, and the upper bound derived from $t^+$, when the spheres of influence of $v_\ell$ and $v_k$ first become disjoint. 

The probability that a neighbour of $v_\ell$ born between $t^-$ and $t^+$ becomes a common neighbour of $v_k$ and $v_\ell$ depends on the fraction of the sphere of influence of $v_\ell$ which lies inside the sphere of influence of $v_k$. If the curvature of the sphere of influence of $v_k$ is negligible so that the boundary locally resembles a line, and if the sphere of influence of $v_\ell$ remains constant in size from $t^-$ to $t^+$, then the average is a good estimate. However, both assumptions are notably false: the curvatures of the spheres of influence of $v_\ell$ an $v_k$ may well be of the same order, and the spheres of influence both shrink during the process.  This implies that the fraction of the sphere of influence of $v_\ell$ inside the sphere of influence of $v_k$ is smaller than assumed near time $t^+$, and larger than assumed near $t^-$.  Thus, the true expected number of common neighbours will likely be larger than indicated  by the average. This leads to an underestimate of the distance (more common neighbours is interpreted as closer distance).

In order to test our interpretation of the error in the estimation, we based the estimator $\hat{d}$ on a  convex combination of the lower bound $L$ on the numbers of common neighbours of  vertices $v_k$ and $v_\ell$ given by $L=p\deg^-(v_\ell,t^-)$ and the upper bound $U=p\deg(v_\ell,t^+)$. So the expected number of common neighbours is assumed to be $(1-c)L+cU$, which gives an expression involving $d$. Solving for $d$ gives our estimator $\hat{d}$. We found that the best value of $c$ occurred when $c=0.005$, which means that the lower bound based on time $t^-$ gives the best indication of the true number of common neighbours. 

The results for this adjusted estimator are given in Figure~\ref{fig:adjusted}. As we can see, the estimator is still not perfect; we conjecture that this is because the value of $c$ that gives the best estimate is not uniform over all pairs, but depends on the relative sizes of the spheres of influence of the pair in the critical time interval. 

\begin{figure}[ht]
\centerline{\includegraphics[width=0.4\textwidth]{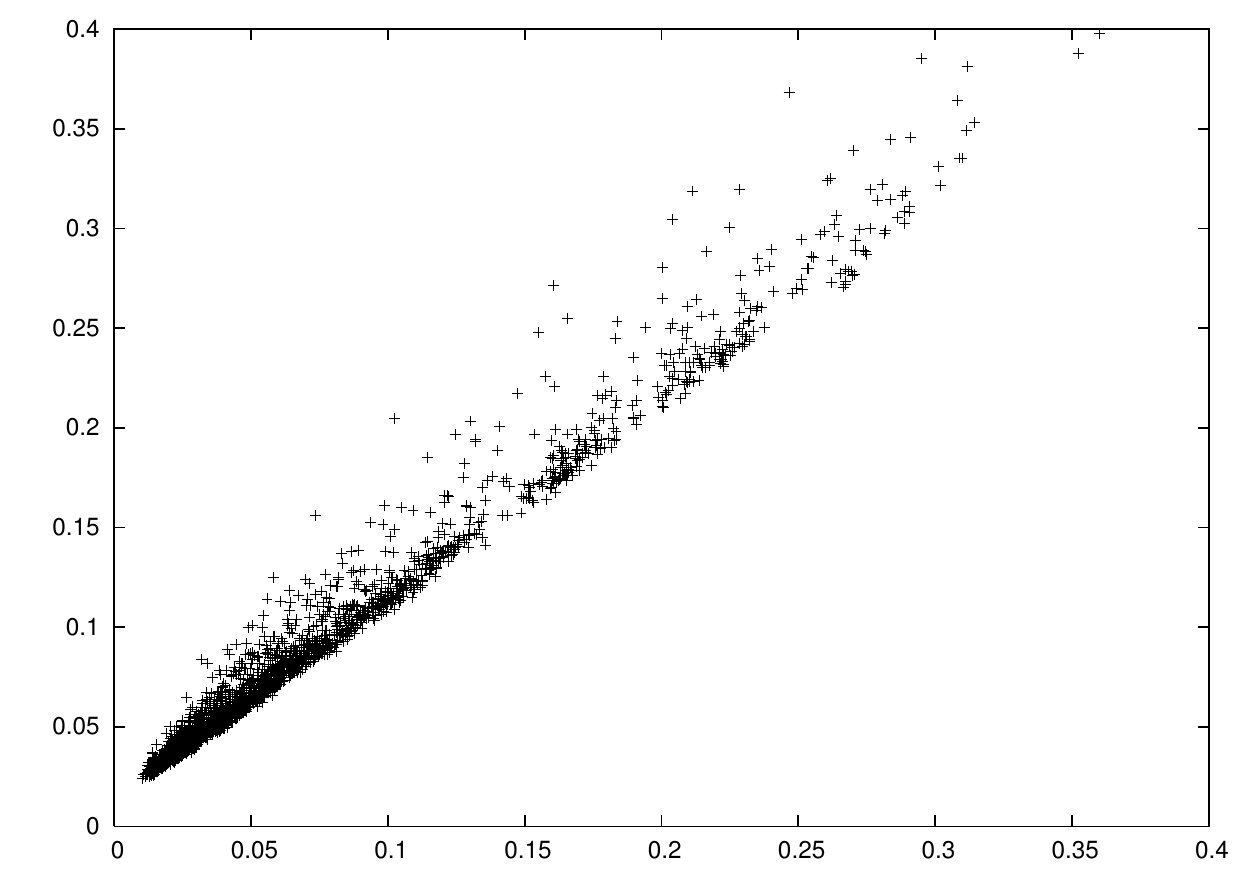}}
\caption{Actual distance ($x$-axis) vs.~estimated distance ($y$-axis) for eligible pairs from simulated data, using the adjusted estimator.\label{fig:adjusted}}
\end{figure}

\section{Edge length distribution}

In this section we derive the edge length distribution; that is, the number of edges whose length is at least a given value $x$. The length of an edge is the (metric) distance between its endpoints.  The edge distribution is a characteristic of spatial models. It will influence a number of graph properties, especially the diameter and the expansion properties. Long edges, even if they are rare, give the opportunity to jump to another locality in the metric space. It has been shown before (see, for example,~\cite{Kleinberg_Nature}) that a small number of long edges can reduce the average path length between vertices by a large factor.

In the SPA model, the degree distribution follows a power law, and the volume of the spheres of influence is proportional to the degree of a vertex. The radius of the sphere of influence determines the limit of the length of an edge to that vertex. Thus, we expect the edge lengths to follow a power law as well. These considerations lead us to consider all edges whose length exceeds a given value
\[
r_\alpha= \left( \frac {n^{-\alpha}}{c_m} \right)^{1/m}.
\] 
(Recall that $c_m$ is the volume of an $m$-dimensional ball of unit radius.) Namely, in this case we can limit our focus to those vertices whose sphere of influence has volume at least $n^{-\alpha}$.

Fix $\alpha >0$. An edge $(v,w) \in E(G)$ will be called a \emph{long edge} if the edge length $d(v,w) \ge r_{\alpha}$. We will study the random variable $e(\alpha)$, the number of long edges in the graph. Formally, 
$$
e(\alpha) = \left|\left\{(v,w) \in E: d(v,w) \geq  \left( \frac {n^{-\alpha}}{c_m} \right)^{1/m}\right\} \right|.
$$ 

\begin{thm}
\label{thm:edge}
In the SPA Model with $1/2<pA_1<1$, \aas\ the number of long edges is given by
\begin{equation}
e(\alpha) = 
   \begin{cases}
      (1+o(1)) \frac{pA_2}{1-pA_1} n, & \text{if $\alpha > 1$} \\
      (1+o(1)) C n^{2-\frac{1}{pA_1}+\alpha \frac{1-pA_1}{pA_1}}, & \text{if }1- \frac{pA_1}{4pA_1+2} < \alpha < 1,
   \end{cases} 
\end{equation}
where
$$
C = \frac {\Gamma \left(\frac{A_2}{A_1} +\frac{1}{pA_1} \right)} {\Gamma \left(\frac{A_2}{A_1} \right)} \frac{ A_1^{ \frac{pA_1}{1-pA_1}}}{1-pA_1} \left( \frac{(1-pA_1)^3}{2pA_1-1} A_1^{\frac{1-2pA_1}{(1-pA_1)pA_1}} + 1 - (pA_1)(1-pA_1) \right).
$$
\end{thm}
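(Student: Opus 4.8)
The plan is to compute $\Eee(e(\alpha))$ up to a $(1+o(1))$ factor and then prove concentration. The starting point is a clean identity for the expected number of long edges. A long edge pointing at a vertex $w$ is created at step $t$ exactly when the new vertex $v_t$ lands in the annulus $S(w,t-1)\setminus B(w,r_\alpha)$ and then links (with probability $p$). Since $v_t$ is uniform on $S$, and $c_m r_\alpha^m = n^{-\alpha}$, the probability of this event is $p\max\{0,|S(w,t-1)|-n^{-\alpha}\}$. Summing over $w$ and $t$ and using $|S(w,t-1)|=(A_1\deg^-(w,t-1)+A_2)/(t-1)$ gives
\[
\Eee(e(\alpha)) = p\sum_{t\ge 1}\sum_{w\in V_{t-1}}\max\left\{0,\ |S(w,t-1)|-n^{-\alpha}\right\}.
\]

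For each vertex I would substitute the sphere trajectory supplied by Theorem~\ref{thm:deg}: a vertex of final in-degree $k$ (born near $i_k=f^{-1}(k)$) has $|S(w,t)| = (1+o(1))A_1 k n^{-pA_1}t^{pA_1-1}$, which is decreasing in $t$ because $pA_1<1$. This volume exceeds the threshold $n^{-\alpha}$ exactly up to the critical time $t^*(k)=(A_1k)^{1/(1-pA_1)}n^{(\alpha-pA_1)/(1-pA_1)}$. Comparing $t^*(k)$ with $n$ splits the vertices into two populations: high-degree vertices with $k\gtrsim n^{1-\alpha}/A_1$, whose sphere never drops below $n^{-\alpha}$ before time $n$ (so essentially all their in-edges are long, except an $n^{-\alpha}$-fraction), and low-degree vertices with $k\lesssim n^{1-\alpha}/A_1$, whose sphere crosses the threshold at $t^*(k)<n$. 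The case $\alpha>1$ is then immediate: even a degree-$0$ vertex has $|S(w,t)|\ge A_2/t\ge A_2/n\gg n^{-\alpha}$, so $e(\alpha)$ is the total number of edges minus the short edges; the short edges number at most $pn^{-\alpha}\sum_t(t-1)=O(n^{2-\alpha})=o(n)$, while $\Eee(|E_t|)=:m_t$ obeys $m_t=m_{t-1}(1+pA_1/(t-1))+pA_2$, whence $m_n=(1+o(1))\frac{pA_2}{1-pA_1}n$.

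For $\alpha<1$ I would weight the per-vertex long-edge count $L(k)=p\int_{i_k}^{\min\{t^*(k),n\}}(|S(w,t)|-n^{-\alpha})\,dt$ by the vertex-birth measure, i.e.\ by the degree distribution. In the low-degree regime $L(k)$ is of order $k^{1/(1-pA_1)}n^{-pA_1(1-\alpha)/(1-pA_1)}$, and the resulting integral in $k$ has an exponent strictly greater than $-1$ \emph{precisely because} $pA_1>1/2$; hence it is dominated by its upper endpoint $K=n^{1-\alpha}/A_1$, that is, by vertices on the boundary between the two populations. A short computation shows that the long-edge mass of the low-degree vertices, the total-edge mass of the high-degree vertices, and the short-edge correction to the high-degree vertices are all of the common order $n^{2-1/(pA_1)+\alpha(1-pA_1)/(pA_1)}$, which accounts for the stated exponent; the intricate constant $C$ then arises from adding these three equal-order contributions and, crucially, from replacing the crude power-law approximation of the degree sequence by the exact SPA degree distribution, whose tail constant is exactly the factor $\Gamma(A_2/A_1+1/(pA_1))/\Gamma(A_2/A_1)$ appearing in $C$.

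For the final \aas\ statement I would condition on the high-probability event of Theorem~\ref{thm:deg} that every vertex of final in-degree at least $\omega\log n$ has its sphere trajectory pinned to the predicted value; on this event $e(\alpha)$ is, up to lower-order terms, a sum of link indicators that are independent given the vertex placements and the pinned trajectories, so a Chernoff/Azuma argument yields concentration around the mean. I expect two main obstacles. First, pinning down the constant $C$ exactly, since it requires tracking the three equal-order contributions simultaneously and using the genuine Gamma-function degree distribution rather than its power-law asymptotics. Second, the concentration step, where I anticipate the hypotheses $pA_1>1/2$ and $\alpha>1-pA_1/(4pA_1+2)$ to be precisely what guarantees that the mean is large enough, and the degree-estimate error terms small enough, for the fluctuations to be $o(\Eee(e(\alpha)))$.
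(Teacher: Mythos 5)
Your overall route is essentially the paper's: per-vertex accounting of long edges via the pinned sphere trajectory $|S(w,t)|=(1+o(1))A_1kn^{-pA_1}t^{pA_1-1}$ from Theorem~\ref{thm:deg}, the threshold-crossing time $t^*(k)=\left(A_1kn^{\alpha-pA_1}\right)^{1/(1-pA_1)}$, a split of vertices at $k\approx n^{1-\alpha}/A_1$, summation against the degree distribution with its Gamma-function tail constant, and the observation that three equal-order contributions (all of order $n^{2-\frac{1}{pA_1}+\alpha\frac{1-pA_1}{pA_1}}$, the domination by the upper endpoint being exactly where $pA_1>1/2$ enters) add up to the constant $C$; the case $\alpha>1$ also matches. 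However, your concentration step has a genuine gap. Conditioning on the event of Theorem~\ref{thm:deg} pins each vertex's trajectory \emph{given its final degree}, and Bernstein/Chernoff then controls each vertex's long-edge count given that trajectory --- this much the paper also does (\wep\ per vertex). But $e(\alpha)$ is dominated by vertices of final degree near and above $k_{\max}=n^{1-\alpha}/A_1$, and the \emph{number} of such vertices, $N_k$, is itself a random variable whose fluctuations are controlled by neither of those two ingredients. The paper supplies the missing third ingredient --- Theorem~\ref{thm:spa1} and Corollary~\ref{cor:n_k}, the \aas\ concentration of $N_k$ --- which is available only for $k\le K=(n/\log^8 n)^{(pA_1)/(4pA_1+2)}$, and this is the real role of the hypothesis $\alpha>1-\frac{pA_1}{4pA_1+2}$: it is precisely the condition $k_{\max}\ll K$. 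Your reading of that hypothesis (``the mean is large enough \ldots\ for the fluctuations to be $o(\Eee(e(\alpha)))$'') misses this mechanism; as written, your argument would apply verbatim on the whole range $1-pA_1<\alpha<1$, i.e.\ it would ``prove'' concentration in the regime $1-pA_1<\alpha\le 1-\frac{pA_1}{4pA_1+2}$, where the paper explicitly argues concentration likely \emph{fails} because the dominant contribution comes from vertices of degree beyond $K$, whose count is known only up to a $\log^4 n$ factor.

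A second, more minor, gap: your integral $L(k)=p\int_{i_k}^{\min\{t^*(k),n\}}\left(|S(w,t)|-n^{-\alpha}\right)dt$ uses the pinned trajectory from the birth time $i_k$ onward, but Theorem~\ref{thm:deg} pins the trajectory only after the vertex has accumulated $\omega\log n$ in-neighbours, and says nothing about vertices whose final degree is below $\omega\log n$. The paper handles this with a separate argument (its Case~3, the quantity $E_3$): by Theorem~\ref{thm:birth}, any vertex that can receive a long edge while still small must be born before $\omega n^{\alpha}\log n$, and each such vertex receives $O(\omega\log n)$ long edges, so $E_3=n^{\alpha+o(1)}$ is negligible against the main term. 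Some argument of this kind is needed to justify discarding the unpinned portion of your integral; with it, and with the $N_k$-concentration input above, your plan does reconstruct the paper's proof.
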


By~\cite{spa1}, the total number of edges in graphs generated by the SPA model equals $(1+o(1)) \frac{pA_2}{1-pA_1} n$. Thus, the first case of the theorem states that for $\alpha >1$, $e(\alpha)$ is approximately equal to the total number of edges. To see why this is so, consider that, as $\alpha$ increases, the threshold for an edge to be classified as ``long'', namely $r_\alpha$, decreases. If $\alpha >1$, then $r_\alpha$ is so small that almost all edges are long.  

The next range for $\alpha$,  $1- \frac{pA_1}{4pA_1+2} < \alpha < 1$, shows a linear relationship between $\log e(\alpha)$ and $\log r_\alpha $. Namely, $m\log r_\alpha = (1+o(1))(-\alpha) \log n$, and thus for this range,
\begin{eqnarray}
\label{eqn:midrange}
\log e(\alpha)  &=& (1+o(1)) \left(2-\frac{1}{pA_1}+\alpha \frac{1-pA_1}{pA_1} \right)\log n\\
&=&(1+o(1))\left((2-\frac{1}{pA_1})\log n -  m\frac{1-pA_1}{pA_1}\log  r_\alpha \right).\notag
\end{eqnarray}
Since $1/2<pA_1<1$, the slope of the line giving the relationship between $\alpha$ and $\log e(\alpha)$ lies between 0 and 1. 

The theorem does not include a claim about the tail of the edge distribution, when $\alpha$ becomes small, and thus $r_\alpha$ becomes relatively large. When $1-pA_1 < \alpha \le 1- \frac{pA_1}{4pA_1+2}$, the main contribution to $e(\alpha)$ comes from vertices that have very high final degree (not moderately high, as before) and the long edges are created till the very end of the process. Unfortunately, the number of vertices of very high degree cannot be precisely controlled ; from \cite{spa1} we only have upper bounds and lower bounds on the maximum degree that hold \wep\ which differ by a factor of $\log^4 n$. Therefore, it seems unlikely that $e(\alpha )$ is concentrated in this case. 

When $\alpha < 1-pA_1$, \aas\ long edges cannot be created at the end of the process but only until time $s = n^{\alpha/(1-pA_1)+o(1)}$. The main contribution to the number of long edges comes from those vertices that have very high degree at time $s$ (and have very high final degrees, of course). By a similar argument as given above, the number of such vertices, and thus the value of $e(\alpha )$, is not likely to be highly concentrated.

A different problem occurs when $pA_1<1/2$. The main contribution in this case comes from vertices born at time $\Theta(n^{\alpha})$ and the long edges must have been created when these vertices were still young, and had small degrees. Unfortunately, the behaviour of the random variable representing the degree of  a vertex is not concentrated until the degree is $\omega \log n$.  We expect $\Theta(n^{\alpha})$ such edges but we cannot control the behaviour of these vertices until the degree is large enough. 

The following theorem fills in the missing case when $\alpha $ is small. However, the results only apply to the {\sl expected} value of $e(\alpha)$, and they give broad results about the order of the exponent, instead of the finer results of the previous theorem. The proofs of both theorems can be found in the last section of the paper.

\begin{thm} \label{thm:edge-exp}
For the SPA model, the logarithmic behaviour of the expected value of $e(\alpha)$ is as follows. \\

For $1/2 < pA_1 < 1$,

\[
\frac{\log\E (e(\alpha))}{\log n} = 
   \begin{cases}
      1+o(1) & \text{if $\alpha \ge 1$}, \\
      {2-\frac{1}{pA_1}+\alpha \frac{1-pA_1}{pA_1}} +o(1), & \text{if }1- pA_1 < \alpha < 1, \\
      {\frac{\alpha p A_1}{1-pA_1}} +o(1), & \text{if } 0 \le \alpha \le 1- pA_1.
   \end{cases} 
\]

For $pA_1 < 1/2$, 
$$
\frac{\log\E (e(\alpha))}{\log n} = 
   \begin{cases}
      1+o(1) & \text{if $\alpha \ge 1$}, \\
      \alpha + o(1), & \text{if } 0 \le \alpha < 1.
   \end{cases} 
$$
\end{thm}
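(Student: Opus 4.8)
The plan is to compute $\E(e(\alpha))$ directly by linearity of expectation, reducing the whole statement to the asymptotics of one explicit double sum. An edge $(v_t,w)$, with $w$ born at time $i<t$, is present and long precisely when the new point $v_t$ lands in the annulus $\{x:r_\alpha\le d(x,w)\le\rho(w,t-1)\}$ around $w$ (and then links, with probability $p$), where $\rho(w,t-1)$ is the radius of $S(w,t-1)$. Since $v_t$ is uniform on $S$ of unit volume, the probability of this is $p\,(\,\min\{|S(w,t-1)|,1\}-n^{-\alpha})^+$, so
\[
\E(e(\alpha)) = p\sum_{i=1}^{n-1}\sum_{t=i+1}^{n}\E\!\left[\left(\min\{|S(v_i,t-1)|,1\}-n^{-\alpha}\right)^{+}\right],
\]
where $|S(v_i,s)|=(A_1\deg^-(v_i,s)+A_2)/s$. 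The cap at volume $1$ affects only a few very young, high-degree vertices and contributes a negligible exponent, so I would dispose of it first and then work with $(|S(v_i,s)|-n^{-\alpha})^+$.

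The key observation is that a summand vanishes unless the sphere of influence exceeds the threshold, and by Theorem~\ref{thm:exp_degree} the expected volume is $\E|S(v_i,s)|=(1+o(1))A_2 s^{pA_1-1}i^{-pA_1}$, which crosses $n^{-\alpha}$ at $s=t^\ast(i):=A_2^{1/(1-pA_1)}n^{\alpha/(1-pA_1)}i^{-pA_1/(1-pA_1)}$; moreover only vertices with $i\lesssim n^{\alpha}$ ever cross it. I would split the inner sum at $s=A_2 n^{\alpha}$. For $s\le A_2 n^{\alpha}$ one has $|S(v_i,s)|\ge A_2/s\ge n^{-\alpha}$ automatically, so the positive part is \emph{linear} in $\deg^-(v_i,s)$ and its expectation is read off exactly from Theorem~\ref{thm:exp_degree} with no appeal to concentration. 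For $s>A_2 n^{\alpha}$ the relevant quantity is $\E[(\deg^-(v_i,s)-D(s))^+]$ with $D(s)=(n^{-\alpha}s-A_2)/A_1>0$, which I would estimate up to $n^{o(1)}$ factors using the concentration results of Section~\ref{proofs}: once the expected degree exceeds $\omega\log n$ the in-degree is, \wep, within a $(1+o(1))$ factor of its mean, which pins down the expectation on the typical event.

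Carrying out the sums reduces everything to two elementary sums over birth times: $\sum_i i^{-pA_1}$ from the linear regime (exponent $-pA_1>-1$, dominated by the largest admissible $i$) and $\sum_i i^{-pA_1/(1-pA_1)}$ from the tail regime, whose dominant end flips according to whether $pA_1/(1-pA_1)$ exceeds $1$, i.e.\ whether $pA_1\gtrless 1/2$; this dichotomy is exactly what produces the two separate statements. For $pA_1>1/2$ the tail sum is dominated by the oldest vertices ($i=\Theta(1)$), giving exponent $\alpha pA_1/(1-pA_1)$ when $0\le\alpha\le 1-pA_1$ and matching onto $2-\tfrac1{pA_1}+\alpha\tfrac{1-pA_1}{pA_1}$ for $1-pA_1<\alpha<1$ (consistent with, and extending downward, Theorem~\ref{thm:edge}); for $pA_1<1/2$ the tail sum is dominated by the youngest contributing vertices ($i=\Theta(n^{\alpha})$), which lie in the linear regime, yielding the single exponent $\alpha$. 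In each case I would verify that the non-dominant regime contributes a strictly smaller exponent. The range $\alpha\ge 1$ is separate: there $n^{-\alpha}\le n^{-1}$ lies below essentially every sphere volume at birth, so $e(\alpha)=(1+o(1))|E|$ and the exponent is $1$ by the edge count $\Eee|E|=\Theta(n)$ from~\cite{spa1}.

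The main obstacle is the tail regime $s>A_2 n^{\alpha}$, where the nonlinearity of $(\deg^- - D)^+$ forces control of both tails of the in-degree at once: the lower deviation, to obtain a matching lower bound from the typical behaviour, and, more delicately, the heavy power-law upper tail of $\deg^-(v_i,s)$, which must be integrated to ensure no rare high-degree vertex dominates $\E[(\deg^- - D)^+]$. This is precisely the step that does not yield an \aas\ statement: for $pA_1<1/2$ the long edges are produced by young, low-degree vertices whose in-degree is not concentrated until it reaches $\omega\log n$, so although the expectation remains computable from Theorem~\ref{thm:exp_degree} in the linear regime, no concentration—and hence no \aas\ version—is available, which is exactly why only the expected value is claimed.
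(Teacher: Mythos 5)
Your starting identity is correct: conditioning on the history at each step gives
$\E(e(\alpha)) = p\sum_{i<t}\E\left[\left(\min\{|S(v_i,t-1)|,1\}-n^{-\alpha}\right)^{+}\right]$,
and your split at $t = A_2 n^{\alpha}$ is a genuinely different decomposition from the paper's: the paper stratifies destination vertices by their \emph{final} in-degree $k$, computes the per-vertex long-edge count conditionally on $k$ via Theorem~\ref{thm:deg} (the three cases in the proof of Theorem~\ref{thm:edge}), and sums against $\E(N_k)=(1+o(1))c\,k^{-1-1/(pA_1)}n$ imported from~\cite{spa1}. Your two sums $\sum_i i^{-pA_1}$ and $\sum_i i^{-pA_1/(1-pA_1)}$, with the dominant end flipping at $pA_1=1/2$, do reproduce all of the claimed exponents, and your explanation of why only an expectation (not an \aas\ statement) is claimed matches the paper's discussion.

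The gap is in the tail regime $s>A_2n^{\alpha}$, which is precisely where the dominant contribution lives in the genuinely new case $1/2<pA_1<1$, $0\le\alpha\le 1-pA_1$. You propose to evaluate $\E[(\deg^-(v_i,s)-D(s))^+]$ by invoking concentration in the form ``once the expected degree exceeds $\omega\log n$ the in-degree is \wep\ within a $(1+o(1))$ factor of its mean.'' No such result exists, and the paper warns against exactly this right after Theorem~\ref{thm:exp_degree}: an individual in-degree is \emph{not} concentrated around its expectation, because of the variability just after birth; Theorems~\ref{thm:deg} and~\ref{thm:conc} only give concentration of the later trajectory \emph{conditional} on the realized degree at an earlier time (or at time $n$) being at least $\omega\log n$. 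Unconditionally, $\deg^-(v_i,s)$ divided by its mean is a nondegenerate random variable, and in the tail regime the expectation of $(\deg^--D)^+$ is carried exactly by the upper tail of that variable --- for most pairs $(i,s)$ in this regime the \emph{typical} vertex contributes $0$. Your upper bound can be repaired at a cost of $n^{o(1)}$ using the \wep\ bound $\deg^-(v_i,s)=O\left((\log^2 n)(s/i)^{pA_1}\right)$ quoted in Section~\ref{proofs}; but the matching \emph{lower} bound for $pA_1>1/2$, $\alpha\le 1-pA_1$ (where the claimed exponent $\alpha pA_1/(1-pA_1)$ strictly exceeds the exponent $\alpha$ that your linear regime supplies) requires an anti-concentration input, e.g.\ $\Prob\left(\deg^-(v_i,s)\ge c\,(s/i)^{pA_1}\right)=\Omega(1)$ uniformly in $s$ for the oldest vertices --- provable by a second-moment (Paley--Zygmund) argument, which appears nowhere in your cited toolkit --- or the paper's device of summing the conditional per-vertex counts against $\E(N_k)$. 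As written, you correctly name this obstacle, but the tool you cite to resolve it does not exist.
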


Thus, for the case where $pA_1>1/2$,  the middle range of $e(\alpha)$ extends beyond the lower bound on $\alpha$ for which precise results for $e(\alpha)$ can be obtained, and there is a third range for small $\alpha$, namely $\alpha <1-pA_1$, for which the expected relationship between $\log e(\alpha)$ and $\alpha $ is given by 
\begin{equation}
\label{eqn:end}
\log e(\alpha) =(1+o(1)) \left( \frac{ p A_1}{1-pA_1} \right) \alpha\log n = -\frac{m pA_1}{1-pA_1}\log r_\alpha.
\end{equation} 
Thus we have clear power law behaviour at the tail of the distribution, with coefficient $\frac{m pA_1}{1-pA_1}>1$. 

To verify our intuition that the real behaviour of the SPA model is similar to the asymptotic results given by the theorems, we ran  simulations. We generated graphs of 100,000 nodes, in $S$ of dimension $m=2$, for various values of $p$.  $A_1$ and $A_2$ were both set to $1$.  The results are seen in Figures~\ref{LEp08} and~\ref{LEp03}, where the logarithm of the number of long edges has been plotted against a range of values for $\alpha$. The straight lines in the figures represent the expected behaviour for the three ranges of $\alpha$ as given by~(\ref{eqn:midrange}) and~(\ref{eqn:end}) (and a horizontal line for the behaviour for large $\alpha$). To show the fact that the number of long edges decreases as the threshold $r_\alpha$ increases, the $x$-axis gives the values of $-\alpha$.

\begin{figure} [h] 
\begin{center}

\includegraphics[width=0.45\textwidth]{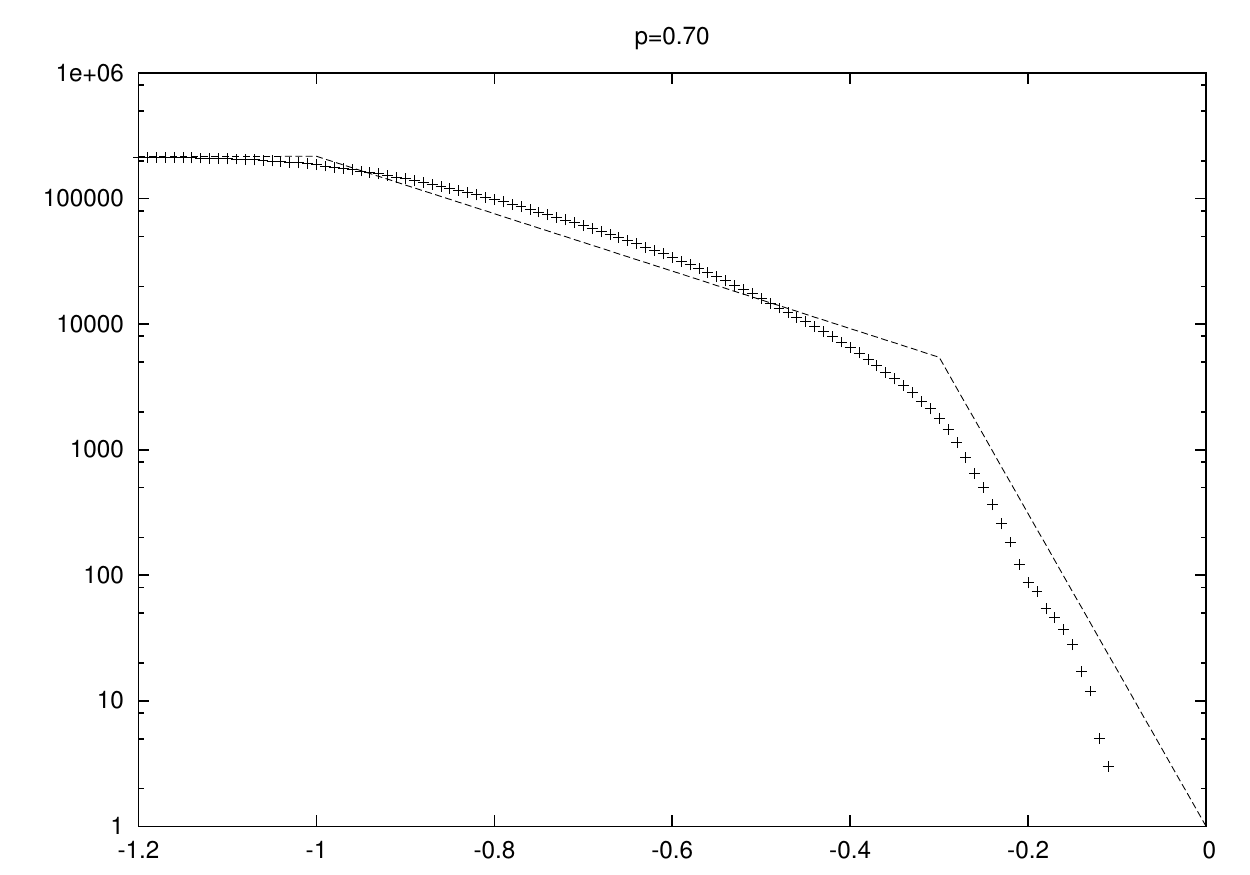}
\includegraphics[width=0.45\textwidth]{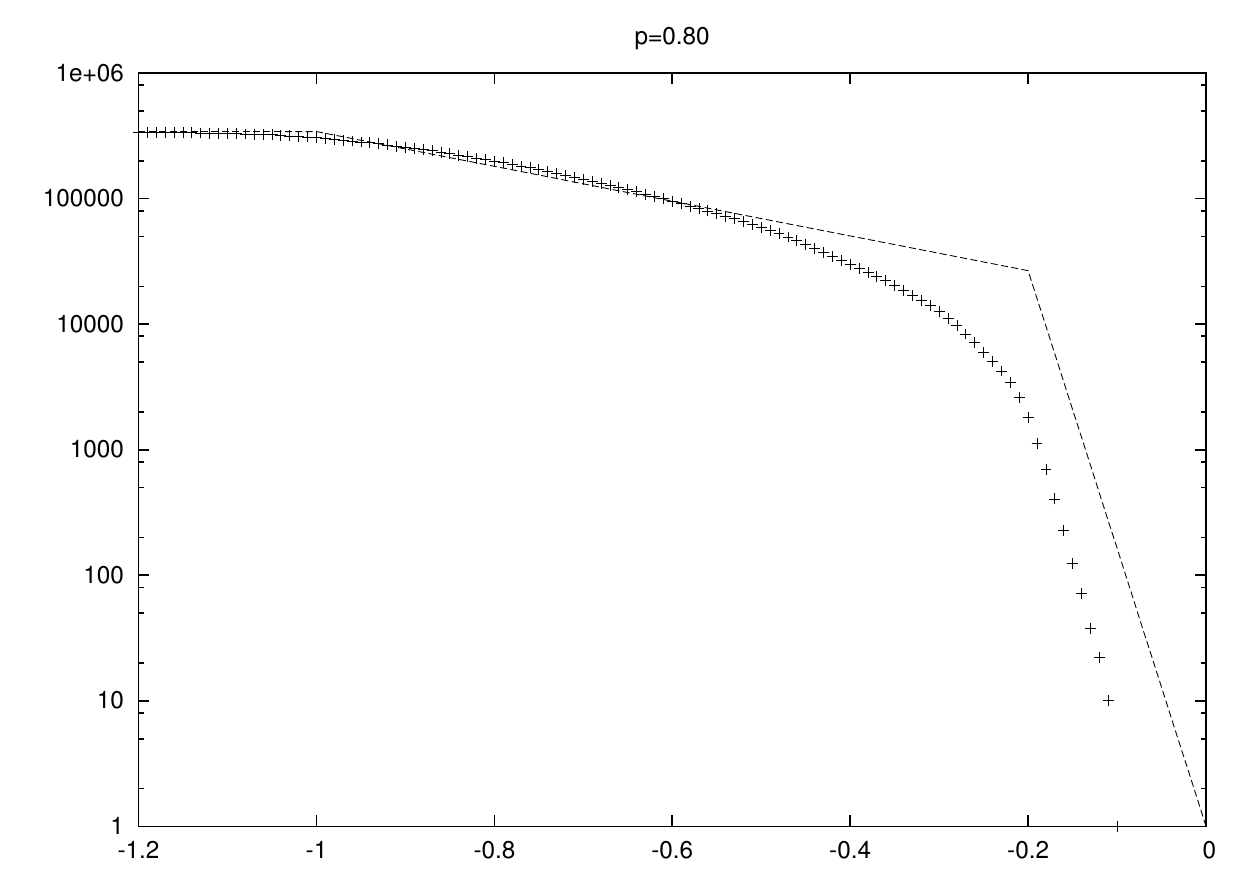}
\caption[Long Edges Simulation vs.\ Theory, $p=0.8$]{\small Long Edges Simulation vs.\ Theory, SPA Model. Parameters: $n=100,000$, $A_1=A_2=1$, $m=2$, $p=0.7$ (left) and $p=0.8$ (right).\label{LEp08}}
\end{center}
\end{figure}

Figure~\ref{LEp08} shows two values in the range $1/2<pA_1<1$. For both cases, the theoretical results expressed in Equations~\ref{eqn:midrange} and~\ref{eqn:end} give a good approximation of the envelope of the curve represented by the simulated values. Not surprisingly, near the threshold $1-pA_1=\alpha$, the simulated version shows smooth behaviour that is a blend between the behaviour on both sides of the range. The angle of the tail of the distribution has good agreement with the value predicted from the modified model. 

\begin{figure} [h] 
\begin{center}
\includegraphics[width=0.45\textwidth]{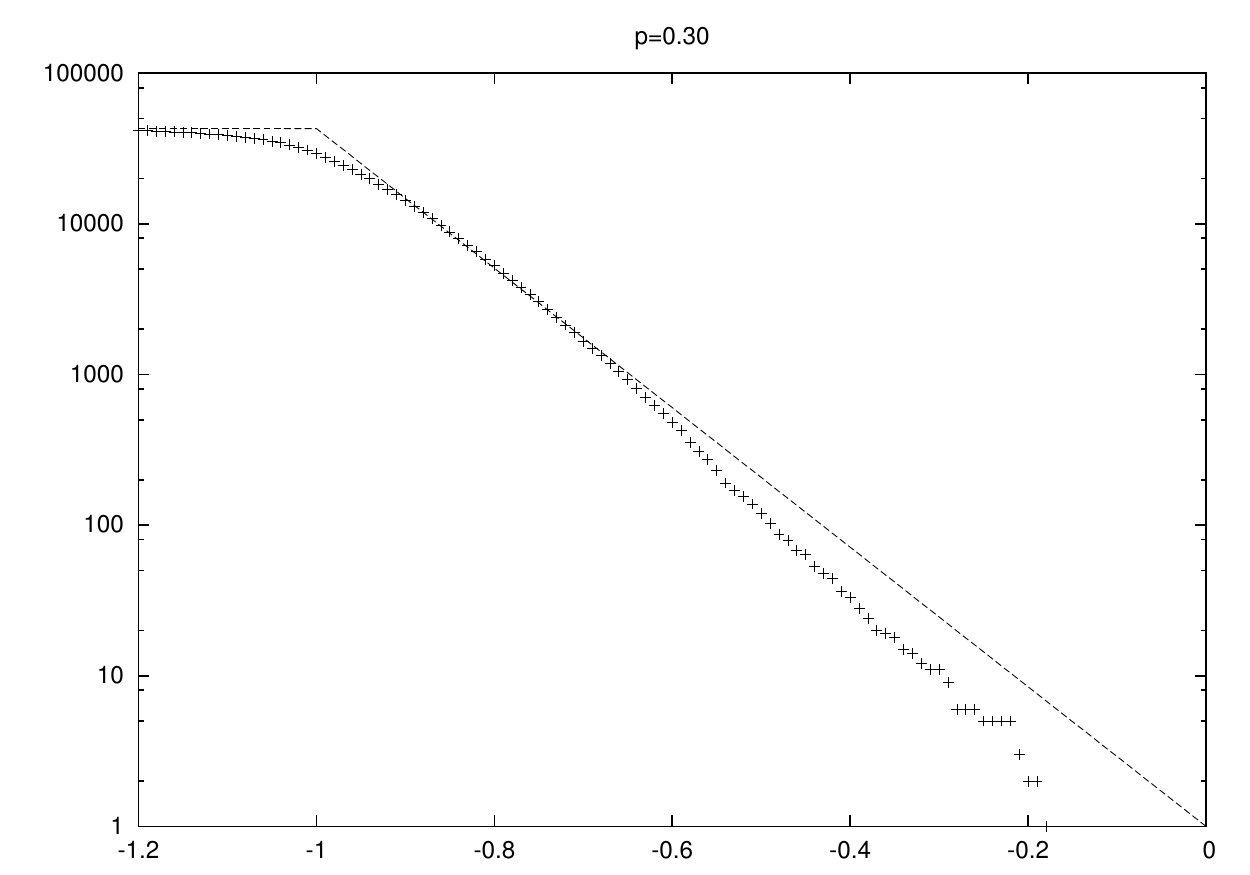}
\includegraphics[width=0.45\textwidth]{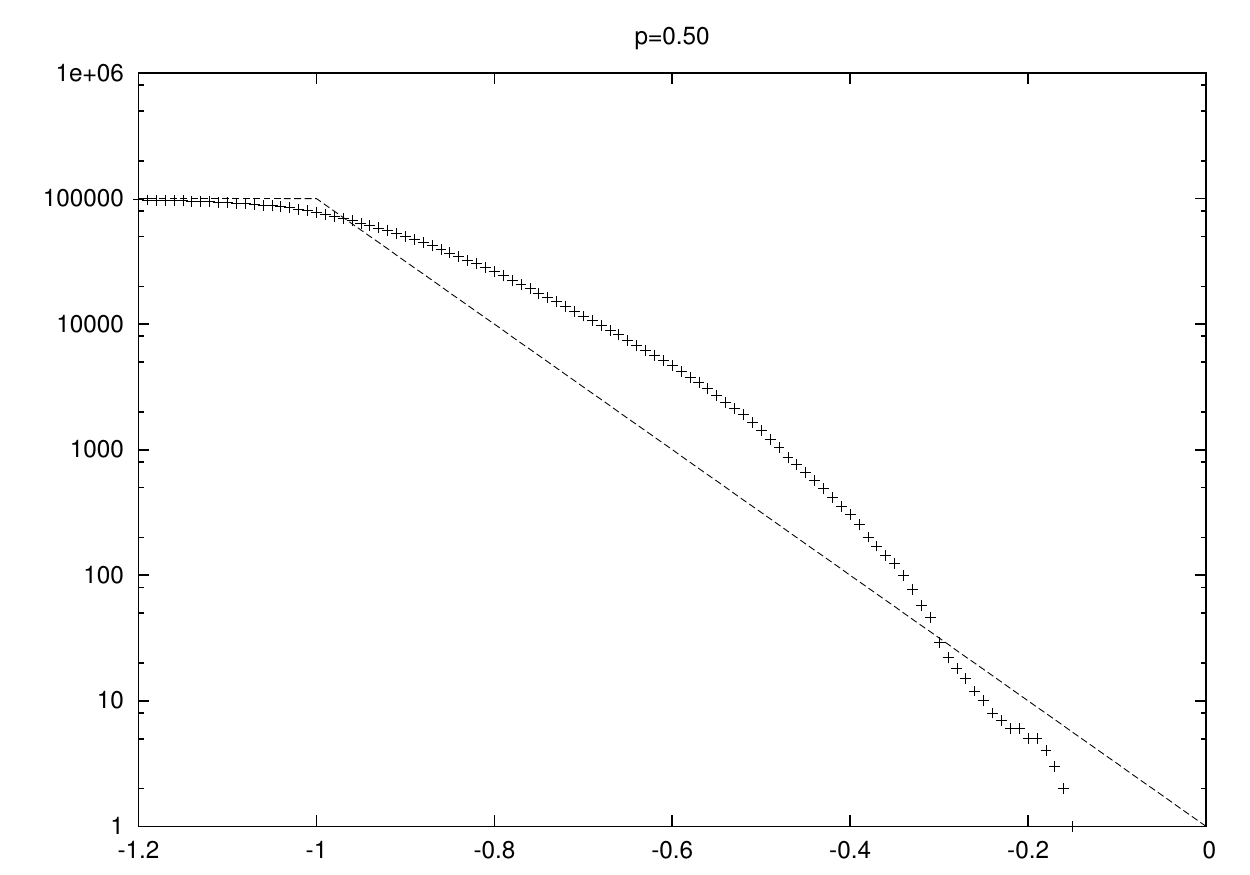}
\caption[Long Edges Simulation vs.\ Theory, $p=0.5$]{\small Long Edges Simulation vs.\ Theory, SPA Model. Parameters: $n=100,000$, $A_1=A_2=1$, $m=2$, $p=0.3$ (left), and $p=0.5$ (right)\label{LEp03}}
\end{center}
\end{figure}

In Figure \ref{LEp03} we give a simulation result for the case where $pA_1<1/2$. Here the modified model predicts only two regimes, which is borne out by the simulation data. We also include a picture for the case $pA_1=1/2$. At this cross-over value, no linear relationship between $\log e(\alpha)$ and $\alpha$ can be observed from the picture. However, our theoretical results predict that for larger values of $n$, the curve should approach a straight line with slope $-\alpha$.

\section{Proofs of the main theorems}\label{proofs}

\subsection{Degree of a vertex}
The first part  of this section is devoted the proof of Theorem~\ref{thm:deg}. We will be using the following version of a well-known Bernstein inequalities many times so let us state it explicitly.
\begin{lemma}[\cite{JLR}]\label{lem:Chernoff} 
Let $X$ be a random variable that can be expressed as a sum $X=\sum_{i=1}^{n} X_{i}$ of independent random indicator variables where $X_{i}\in\mathrm{Be}(p_{i})$ with (possibly) different $p_{i}=\mathbb{P} (X_{i} = 1)= \mathbb{E}X_{i}$. Then the following holds for $t \ge0$:
\begin{align*}
\mathbb{P} (X \ge\mathbb{E }X + t)  &  \le\exp\left(  - \frac{t^{2}}{2(\mathbb{E }X+t/3)} \right)  ,\\
\mathbb{P} (X \le\mathbb{E }X - t)  &  \le\exp\left(  - \frac{t^{2}}{2\mathbb{E }X} \right)  .
\end{align*}
In particular, if $\varepsilon\le3/2$, then
\begin{align}\label{eq:Ch}
\mathbb{P} (|X - \mathbb{E }X| \ge\varepsilon\mathbb{E }X)  &  \le2 \exp\left(  - \frac{\varepsilon^{2} \mathbb{E }X}{3} \right)  .
\end{align}
\end{lemma}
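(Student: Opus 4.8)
The plan is to prove this by the exponential-moment (Chernoff) method, which is the standard route to Bernstein-type inequalities; since the statement is classical and attributed to \cite{JLR}, one could simply cite it, but here is how I would establish it from scratch. Write $\mu = \mathbb{E}X = \sum_{i=1}^{n} p_i$. For the upper tail, fix $\lambda > 0$ and apply Markov's inequality to $e^{\lambda X}$, giving
\[
\mathbb{P}(X \ge \mu + t) \le e^{-\lambda(\mu+t)}\, \mathbb{E}\, e^{\lambda X}.
\]
By independence the moment generating function factorizes as $\mathbb{E}\, e^{\lambda X} = \prod_{i=1}^{n}\big(1 + p_i(e^{\lambda}-1)\big)$, and the elementary bound $1+x \le e^x$ yields $\mathbb{E}\, e^{\lambda X} \le \exp\big(\mu(e^{\lambda}-1)\big)$. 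Hence $\mathbb{P}(X \ge \mu + t) \le \exp\big(\mu(e^{\lambda}-1) - \lambda(\mu+t)\big)$.

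Next I would optimize the exponent over $\lambda > 0$. Differentiating shows the minimum occurs at $\lambda = \log(1 + t/\mu)$, which produces the clean rate-function bound $\mathbb{P}(X \ge \mu + t) \le \exp\big(-\mu\, h(t/\mu)\big)$, where $h(x) = (1+x)\log(1+x) - x$. The lower tail is entirely analogous: applying Markov to $e^{-\lambda X}$ with $\lambda > 0$ and optimizing ($\lambda = -\log(1 - t/\mu)$) gives $\mathbb{P}(X \le \mu - t) \le \exp\big(-\mu\, \tilde h(t/\mu)\big)$ with $\tilde h(x) = (1-x)\log(1-x) + x$ for $0 \le x \le 1$ (the case $t > \mu$ being trivial).

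To reach the precise forms stated, I would invoke two elementary calculus inequalities. For the upper tail I need $h(x) \ge \frac{x^2}{2(1 + x/3)}$ for all $x \ge 0$; substituting $x = t/\mu$ turns $-\mu\, h(t/\mu)$ into $-\frac{t^2}{2(\mu + t/3)}$, exactly the claimed bound. For the lower tail the simpler inequality $\tilde h(x) \ge x^2/2$ on $[0,1]$ gives $-\mu\, \tilde h(t/\mu) \le -t^2/(2\mu)$. I expect verifying the first inequality to be the only real obstacle, since it is the step that pins down the specific denominator $2(\mathbb{E}X + t/3)$. It is handled by setting $\phi(x) = (1 + x/3)\big((1+x)\log(1+x) - x\big) - x^2/2$ and checking $\phi(0) = \phi'(0) = 0$; two differentiations reduce $\phi''(x) \ge 0$ to the standard inequality $\log(1+x) \ge x/(1+x)$, whence $\phi \ge 0$. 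The lower-tail inequality reduces similarly to $-\log(1-x) \ge x$.

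Finally, for the ``in particular'' bound I would set $t = \varepsilon\mu$. The upper tail becomes $\exp\!\big(-\tfrac{\varepsilon^2 \mu}{2(1 + \varepsilon/3)}\big)$, and the hypothesis $\varepsilon \le 3/2$ forces $1 + \varepsilon/3 \le 3/2$, so this is at most $\exp(-\varepsilon^2 \mu/3)$; the lower tail gives $\exp(-\varepsilon^2\mu/2) \le \exp(-\varepsilon^2\mu/3)$. A union bound over the two one-sided events then supplies the factor $2$ and completes the proof.
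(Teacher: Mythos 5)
Your proposal is correct, but note that the paper itself offers no proof of this lemma at all: it is stated as a known result and attributed to the reference \cite{JLR} (Janson, {\L}uczak, Ruci\'nski, \emph{Random Graphs}), which is where the argument lives. What you have written out is essentially that classical argument made self-contained: Markov applied to $e^{\lambda X}$, factorization of the moment generating function by independence, the bound $1+x\le e^x$, optimization at $\lambda=\log(1+t/\mu)$ (resp.\ $\lambda=-\log(1-t/\mu)$) to reach the rate functions $h(x)=(1+x)\log(1+x)-x$ and $\tilde h(x)=(1-x)\log(1-x)+x$, and then the two calculus inequalities $h(x)\ge \frac{x^2}{2(1+x/3)}$ and $\tilde h(x)\ge x^2/2$ to land on the stated Bernstein forms. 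I checked the details you flagged as the only real obstacles: for $\phi(x)=(1+x/3)h(x)-x^2/2$ one indeed gets $\phi(0)=\phi'(0)=0$ and $\phi''(x)=\tfrac23\bigl(\log(1+x)-\tfrac{x}{1+x}\bigr)\ge 0$, and the lower-tail inequality does reduce to $-\log(1-x)\ge x$; the ``in particular'' step is also right, since $\varepsilon\le 3/2$ gives $2(1+\varepsilon/3)\le 3$ and the union bound supplies the factor $2$. So the difference is one of economy versus self-containment: the paper's citation is the standard (and sufficient) practice for a textbook concentration inequality, while your derivation verifies the specific constants (the $t/3$ in the denominator and the threshold $\varepsilon\le 3/2$) that the paper takes on faith from \cite{JLR}.
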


Now, we are ready to prove the following key observation.
\begin{thm}\label{thm:conc}
Suppose that $\deg^-(v,T)=d \ge \omega \log n$, where $\omega=\omega(n)$ is any function tending to infinity together with $n$. Then, for every value of $t$, $T \le t \le 2T$, we get that
$$
\left| \deg^-(v,t) - d \cdot \left( \frac tT \right)^{pA_1} \right| \le \frac {2}{pA_1} \cdot \frac {t}{T} \sqrt{d \log n}
$$
with probability $1-O(n^{-4/3})$. 
\end{thm}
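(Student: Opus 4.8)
The plan is to run the degree process forward from time $T$ conditioned on $\deg^-(v,T)=d$, working with the shifted quantity $X(v,t)=\deg^-(v,t)+A_2/A_1$ from the proof of Theorem~\ref{thm:exp_degree}. Given the history up to time $t$, $X$ increases by one with probability $pA_1X(v,t)/t$ and is unchanged otherwise, so the same telescoping product $\prod_{j=T}^{t-1}(1+pA_1/j)=(1+o(1))(t/T)^{pA_1}$ used there identifies $d\,(t/T)^{pA_1}$ as the trajectory about which $\deg^-(v,t)$ ought to concentrate. The obstacle that makes Lemma~\ref{lem:Chernoff} inapplicable verbatim is that the per-step success probability $pA_1X(v,t)/t$ depends on the current, random degree, so the increments of $\deg^-(v,\cdot)$ over $[T,2T]$ are not independent. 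I would remove this dependence by comparing the true process, via a monotone coupling, with sums of \emph{independent} indicators manufactured from the two target envelopes, and prove the upper and lower deviation bounds separately.

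For the upper tail put $U(t)=d\,(t/T)^{pA_1}+\tfrac{2}{pA_1}\tfrac{t}{T}\sqrt{d\log n}$ and let $\hat\xi_T,\hat\xi_{T+1},\dots$ be independent with $\hat\xi_j\in\mathrm{Be}(\hat q_j)$, $\hat q_j=\min\{1,\,pA_1(U(j)+A_2/A_1)/j\}$. Since $\hat q_j$ exceeds the true success probability $pA_1X(v,j)/j$ exactly on the event $\{\deg^-(v,j)\le U(j)\}$, the two families can be coupled so that the true increment is at most $\hat\xi_j$ at every step at which the actual degree has not yet left the envelope. A first-passage argument then shows that the event $\hat E=\{\hat D(t)\le U(t)\text{ for all }t\in[T,2T]\}$, with $\hat D(t)=d+\sum_{j=T}^{t-1}\hat\xi_j$, forces $\deg^-(v,t)\le U(t)$ throughout: were $s$ the first time the degree crossed $U$, the coupling would be valid up to $s$ and would contradict $\hat D(s)\le U(s)$. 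This reduces the upper bound to a statement about the sum of independent indicators $\hat D(t)$. The lower tail is symmetric, with $L(t)=d\,(t/T)^{pA_1}-\tfrac{2}{pA_1}\tfrac{t}{T}\sqrt{d\log n}$ and independent $\check\xi_j\in\mathrm{Be}(pA_1(\max\{0,L(j)\}+A_2/A_1)/j)$ now dominated \emph{below} by the truth on $\{\deg^-(v,j)\ge L(j)\}$, so that $\{\check D(t)\ge L(t)\ \forall t\}$ forces $\deg^-(v,t)\ge L(t)$.

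It then remains to bound $\Prob(\exists t:\hat D(t)>U(t))$ and $\Prob(\exists t:\check D(t)<L(t))$. Summing $\hat q_j$ with $\sum_{j=T}^{t-1}1/j=\log(t/T)+O(1/T)$ gives $\Eee\hat D(t)=d\,(t/T)^{pA_1}+O\big(\tfrac{t-T}{T}\sqrt{d\log n}\big)+O(1)$, whose leading term coincides with that of $U(t)$; hence the gap $U(t)-\Eee\hat D(t)$ is $\Theta\big(\tfrac{t}{T}\sqrt{d\log n}\big)$, of order $\sqrt{d\log n}$, while $\Eee(\hat D(t)-d)=\Theta(d)$. Feeding a deviation $\Theta(\sqrt{d\log n})$ against a mean $\Theta(d)$ into the Bernstein estimate of Lemma~\ref{lem:Chernoff} produces a failure probability $\exp(-\Theta(\log n))$, and the constant $2/(pA_1)$ is what ensures this exponent is at least $\tfrac{4}{3}\log n$ for every $pA_1\in(0,1)$. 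Because $pA_1<1$ the gap $U(t)-\Eee\hat D(t)$ is increasing in $t$, so it is bounded below by its value near $t=T$; this lets me apply a maximal (Doob-type) form of the exponential bound — available since $\hat D$ is a sum of independent variables, so $\exp(\lambda(\hat D(t)-\Eee\hat D(t)))$ is a submartingale — and thereby control \emph{all} $t\in[T,2T]$ at once, rather than union-bounding over the $\Theta(T)$ steps, which would be too lossy to reach $O(n^{-4/3})$. Combining the two one-sided bounds and translating back from $X$ to $\deg^-$ (the shift $A_2/A_1$ is absorbed into the error terms) yields the claim.

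The hardest point is the circularity between the good event and the validity of the coupling: the independent comparison sequence is guaranteed to dominate the true increments only while the degree stays inside the envelope, which is precisely what we are trying to establish. The first-passage/whichever-comes-first argument breaks this circle, and the envelopes must be chosen self-consistently so that $\Eee\hat D(t)$ and $\Eee\check D(t)$ already sit a full Chernoff fluctuation inside $U(t)$ and $L(t)$. Once that is set up correctly, verifying that the constant $\tfrac{2}{pA_1}$ together with the maximal inequality delivers the probability $1-O(n^{-4/3})$ is routine bookkeeping.
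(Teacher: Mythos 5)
Your proposal follows essentially the same route as the paper's proof: the paper introduces a stopping time $T_0$ marking the first exit from the upper envelope, dominates the increments up to $T_0$ by a sum of \emph{independent} Bernoulli indicators whose success probabilities are read off the envelope (your $\hat q_j$), derives that a crossing forces a deviation of at least $2\sqrt{d\log n}$ against a mean of order $d$, and closes with the Bernstein bound of Lemma~\ref{lem:Chernoff}, treating the lower tail by the symmetric argument --- exactly your coupling/first-passage scheme with the constant $\frac{2}{pA_1}$ playing the role you identify. The one genuine difference is that where the paper applies the Chernoff bound directly at the random time $T_0$ (leaving the uniformity over $t\in[T,2T]$ implicit), you invoke a Doob-type maximal inequality for the dominating sum, which makes that step airtight.
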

\begin{proof}
Let $\omega=\omega(n)$ be any function tending to infinity together with $n$. Suppose that $\deg^-(v,T)=d \ge \omega \log n$. We will show that the upper bound holds; the lower bound can be obtained by using an analogous symmetric argument.

Let us introduce the following stopping time
$$
T_0=\min \left\{ t \ge T : \deg^-(v,t) >  d \cdot \left( \frac tT \right)^{pA_1} + \frac {2}{pA_1} \cdot \frac {t}{T} \sqrt{d \log n} ~~\vee~~ t=2T+1 \right\}.
$$ 

A \emph{stopping time} is any random variable $T_0$ with values in $\{0,1,\dots \} \cup \{\infty\}$ such that it can be determined whether $T_0 = t^*$ for any time $t^*$ from knowledge of the process up to and including time $t^*$. The name can be misleading, since a process does not \emph{stop} when it reaches a stopping time. Here, $T_0$ determines the first time the process does \emph{not} exhibit the bounded behaviour we wish to establish. The condition $t=2T+1$ has been added to assure that the set is never empty, and thus $T_0$ is well-defined. If $T_0=2T+1$, then the in-degree of $v$ remained bounded as given during the entire time interval $T\leq t\leq 2T$. In order to prove the bound, we need to show that with probability $1-O(n^{-4/3})$ we have $T_0=2T+1$. 

Suppose that $T_0 \le 2T$. Note that for $t \ge T$ up to and including time-step $T_0-1$, the random variable $\deg^-(v,t)$ is (deterministically) bounded from above, and so the number of new neighbours accumulated during this phase of the process, $\deg^-(v,T_0)-\deg^-(v,T)$, can be (stochastically) bounded from above by the sum $X = \sum_{t=T}^{T_0-1} X_t$ of independent indicator random variables $X_t$ with 
$$
\Prob(X_t = 1) = p \ \frac {A_1 \left( d \left( \frac tT \right)^{pA_1} + \frac {2}{pA_1} \cdot \frac {t}{T} \sqrt{d \log n} \right) + A_2}{t}.
$$
Hence,
\begin{eqnarray*}
\E \deg^-(v,T_0) &\le& d + \E X ~~=~~ d + \sum_{t=T}^{T_0-1} \E X_t \\
&=& d + p A_1 d T^{-p A_1} \left( \sum_{t=T}^{T_0-1} t^{pA_1-1} \right) + \frac {T_0-T}{T}  2 \sqrt{d \log n} + O(1) \\
&=& d \left( \frac {T_0}{T} \right)^{pA_1} + \frac {T_0-T}{T}  2 \sqrt{d \log n} + O(1).
\end{eqnarray*}
This implies that
\begin{eqnarray*}
\deg^-(v,T_0) - \E \deg^-(v,T_0) &\ge& \frac {2}{pA_1} \cdot \frac {T_0}{T} \sqrt{d \log n} - \frac {T_0-T}{T}  2 \sqrt{d \log n} - O(1) \\
&\ge& 2 \sqrt{d \log n},
\end{eqnarray*}
and it follows from the bound~(\ref{eq:Ch}) that
$$
\Prob (|X - \E X| \ge 2 \sqrt{d \log n}) \le 2 \exp \left(  - \ \eps \ \frac{2 \sqrt{d \log n}}{3} \right),
$$
where $\eps = 2 \sqrt{d \log n} / \E X$. Since the maximum value of $\E X$ corresponds to $T_0=2T$, it follows that $\E X \le d (2^{pA_1} - 1)(1+o(1)) \le d$, and so $\eps \ge 2 \sqrt{d^{-1} \log n}$. Therefore, the probability that $T_0 \le 2T$ is at most $2 \exp (- \frac 43 \log n)$ and the theorem is finished.
\end{proof}

Now, with Theorem~\ref{thm:conc} in hand we can easily get Theorem~\ref{thm:deg}. For a given vertex $v$ of degree $\omega \log n$ at time $T$ we obtain from Theorem~\ref{thm:conc} that, with probability $1-O(n^{-4/3})$, 
$$
d \left( \frac tT \right)^{pA_1} \left(1 - \frac {4}{pA_1} \sqrt{d^{-1} \log n} \right) \le \deg^-(v,t) \le d \left( \frac tT \right)^{pA_1} \left(1 + \frac {4}{pA_1} \sqrt{d^{-1} \log n} \right)
$$
for $T \le t \le 2T$. We can now keep applying the same theorem for times $2T$, $4T$, $8T$, $16T, \dots$, using the final value as the initial one for the next period, to get the statement for all values of $t$ from $T$ up to and including time $n$. Since we apply the theorem $O(\log n)$ times (for a given vertex $v$), the statement holds with probability $1-o(n^{-1})$ and so \aas\ the statement we are about to prove will hold for all vertices.

It remains to make sure that the accumulated multiplicative error is still only $(1+o(1))$. After applying the theorem recursively $i$ times the degree is shown to be $d 2^{pA_1i}(1+o(1))$. Using this rough estimate, and assuming the theorem is applied for a total of $k=O(\log n )$ times, we get that the error term is, in fact, bounded from above by
\begin{eqnarray*}
\prod_{i=1}^{k} \left( 1 + \frac{5}{pA_1} \sqrt{d^{-1} 2^{-pA_1i} \log n} \right) &=& (1+o(1)) \exp \left( \frac{5}{pA_1} \sqrt{d^{-1} \log n} \sum_{i=1}^{k} 2^{-pA_1i/2} \right) \\
&=& (1+o(1)) \exp \left( O(\sqrt{d^{-1} \log n}) \right) \\
&=& 1+o(1),
\end{eqnarray*}
since $d$ grows faster than $\log n$. A symmetric argument can be used to show a lower bound for the error term and so  Theorem~\ref{thm:deg} holds.

\subsection{Number of common neighbours}

The proof of Theorem~\ref{thm:cn}, which gives a formula for the number of common neighbours of two given vertices $v$ and $w$, is based on three cases, as explained in Section 3 and Figure~\ref{fig:cases}. 
The division into three cases is based on the trend, as shown in Theorem \ref{thm:deg}, that spheres of influence tend to shrink over time. It can happen that spheres of influence that are disjoint become overlapping at a later time instance, and thus do not fit any of the three cases. However, this behaviour happens with low enough probability that it does not affect our result.

\begin{proof}[Proof of Theorem~\ref{thm:cn}.]
The proof depends heavily on Theorem~\ref{thm:deg}. Any precise reference to the theorem will therefore be omitted. We can assume that at time $T$, 
$$
\deg(v_\ell, T) = (1+o(1)) \frac {A_2}{A_1} \omega \log n
$$ 
and the degree of this vertex is as predicted by (\ref{eqn:deg}) until the end of the process (that is, the ratio between the upper and lower bound on the degree is deterministically equal to $(1+o(1))$). Since $k \ge l$, the degree of $v_k$ for the time interval after $T$ is given by (\ref{eqn:deg}) as well. Let $r(v,t)$ denote the radius of the sphere of influence around $v$ at time $t$; that is, $r(v,t) = (|S(v,t)|/c_m)^{1/m}$.

\textbf{Case 1: } Suppose that $d \ge \eps (\omega \log n / T)^{1/m}$ for some $\eps > 0$. For $T \le t \le n$, we can deduce from the expression for the degree of $v_\ell$ over time and the expression for the volume $|S(v_\ell,t)|$ of the sphere of influence of $v_\ell$ that
$$
r(v_\ell, t) = (1+o(1)) \left( \frac {A_2 \omega \log n ( t / T )^{pA_1}}{c_m t} \right)^{1/m}.
$$
In particular, let us note that $d$ is of greater or equal order as $r(v_\ell,T)$, and hence of greater or equal order as $r(v_k,T)$ as well. Moreover, both radii tend to be decreasing from time $T$ on. (Formally what we mean is that $r(v_\ell,t) > r(v_\ell,t(1+\eps))$ for any $\eps>0$ and $t\ge T$. When a vertex receives a new neighbour, its radius slightly increases.) Therefore, there exists a constant $c=c(\eps)>0$ such that $S(v_\ell,t)$ and $S(v_k,t)$ are disconnected for every $t > cT$ and so there is no chance to create more common neighbours.  Since at time $cT$ the degree of vertex $v_\ell$ is $(1+o(1)) (A_2/A_1) c^{pA_1} \omega \log n = O(\omega \log n)$, we can apply an obvious upper bound to get
$$
cn(v_\ell,v_k,n) \le \deg(v_\ell,n) = O(\omega \log n).
$$ 
Finally, note that it can happen that $cT > n$, which means that  the process stops before the spheres of influence become disjoint. This causes no problem since the upper bound for the number of common neighbours at time $cT$ will then trivially hold at time $n$.

\textbf{Case 2: } Suppose $k \ge (1+\eps) \ell$ for some $\eps>0$ and $d$ satisfies inequality~(\ref{eq:cond_for_d}). Note that the condition for $d$ implies that at time $n$ the sphere of influence of $v_\ell$ is contained in that of $v_k$. Moreover, the radii of influence are proportionally decreasing during the process from the time we start having concentrated behaviour of degrees onwards (that is, from time $T$ on, in the sense explained earlier). So the sphere of influence of $v_\ell$ is contained in the sphere of influence of $v_k$ from  time $T$ to time $(1+o(1))n$. Any vertex $u$ that links to $v_\ell$ lies inside the sphere of influence of $v_\ell$ and thus of $v_k$ as well, and has a probability $p$ of also linking to $v_k$.  

At the end of the process (for $t=(1+o(1))n$) it can happen that the sphere of influence $S(v_\ell,t)$ is not completely contained in $S(v_k,t)$,  but it is the case that they overlap to a large extend, namely 
\begin{equation}\label{eq:ratio}
\frac {|S(v_\ell,t) \cap S(v_k,t)|}{|S(v_\ell,t)|} = 1+o(1).
\end{equation}
Thus, the probability that a neighbour of $v_\ell$, added during this phase of the process, is also a neighbour of $v_k$ is $(1+o(1))p$. 

Therefore, $\E cn(v_\ell,v_k,n) = (1+o(1)) p \ell$, since the number of common neighbours accumulated until time $T$ is $O(\omega \log n)$ and so is negligible. 

Suppose now that $k = (1+o(1)) \ell$ and $d \ll (k/n)^{1/m}$. In this case, the radii of $v_\ell$ and $v_k$ are approximately equal from time $T$ to the end of the process (that is, they differ by a multiplicative factor of $(1+o(1)))$. Since $d$ is of order smaller than the radii at the end of the process, property~(\ref{eq:ratio}) holds for $T \le t \le n$ and the results holds by the same argument as before.

\textbf{Case 3:}  Suppose $k \ge (1+\eps) \ell$ for some $\eps>0$ and $d$ satisfies inequality~(\ref{eq:cond_for_d2}). Note that the condition for $d$ implies that at time $T$ the sphere of influence of $v_\ell$ is contained in that of $v_k$, but this is not the case at time $n$.

Let $t^-$ be the first moment when $S(v_\ell,t)$ is not completely contained in $S(v_k,t)$ ($T < t^- \le n$). Let $t^+$ be the last time when the spheres overlap ($t^- \le t^+$). (Note that it is possible that $t^+ > n$ but, as before, this causes no problem.) Up to time $t^-$, each neighbour of $v_\ell$ will be a common neighbour of $v_\ell$ and $v_k$ with probability $p$. From time $t^+$ to $n$, no common neighbours can be created. From time $t^-$ until time $t^+$, the probability that a neighbour of $v_\ell$ becomes a neighbour of $v_k$ is \emph{at most} $p$.  Thus, $p \deg^-(v_\ell,t^-)$ and $p \deg^-(v_\ell,t^+)$ form a lower and an upper bound, respectively, on the expected number of common neighbours of $v$ and $w$.
 
Note that at time $t^-$, $S(v_\ell,t^-)$ is contained in $S(v_k,t^-)$ and ``touches" the boundary from the inside (the distance between the boundaries at time $t^{-}$ may not be exactly zero but certainly is $o(d)$). At time $t^+$, $S(v_\ell,t^+)$ is outside $S(v_k,t^-)$ but ``touches" the boundary from the outside. Since the centers of $S(v_\ell,t)$ and $S(v_k,t)$ are at distance $d$ from each other, this translates into the following expressions involving $t^-$ and $t^+$:
\begin{eqnarray*}
r(v_k, t^-) - r(v_\ell,t^-) &=& (1+o(1)) d,\\
r(v_k, t^+) + r(v_\ell,t^+) &=& (1+o(1)) d.
\end{eqnarray*}

Using the concentration result about the in-degree, this translates into the following conditions on $t^-$ and $t^+$
\begin{eqnarray*}
\left( \frac {A_2}{c_m} (t^-)^{pA_1-1} \right)^{1/m} i_k^{-pA_1/m} \left( 1 - \left( \frac {i_k}{i_\ell} \right)^{pA_1/m} \right) &=& (1+o(1)) d, \\
\left( \frac {A_2}{c_m} (t^+)^{pA_1-1} \right)^{1/m} i_k^{-pA_1/m} \left( 1 + \left( \frac {i_k}{i_\ell} \right)^{pA_1/m} \right) &=& (1+o(1)) d, \\
\end{eqnarray*}
and so
\begin{eqnarray*}
t^- &=& (1+o(1)) \left( \frac {A_2}{c_m} \right)^{ \frac{1}{1-pA_1}} i_k^{-\frac{pA_1}{1-pA_1}} d^{-\frac{m}{1-pA_1}}
             \left(1-\left( \frac{i_k}{i_\ell} \right)^{pA_1/m} \right)^{\frac{m}{1-pA_1}}, \\ 
t^+ &=& (1+o(1)) \left( \frac {A_2}{c_m} \right)^{ \frac{1}{1-pA_1}} i_k^{-\frac{pA_1}{1-pA_1}} d^{-\frac{m}{1-pA_1}}
             \left(1+\left( \frac{i_k}{i_\ell} \right)^{pA_1/m} \right)^{\frac{m}{1-pA_1}}. 
\end{eqnarray*}

The number of common neighbours of $v_k$ and $v_\ell$ is bounded from below by $(1+o(1))p\deg^-(v_\ell,t^-)$, and from above by $(1+o(1))p\deg(v_\ell,t^+)$. Using our knowledge about the behaviour of the in-degree of $v_\ell$, this leads to the  following bounds, which hold within a $(1+o(1))$ term: 
\begin{eqnarray*}
&p A_1^{-1} A_2^{\frac{1}{1-pA_1}} c_m^{-\frac{pA_1}{1-pA_1}} i_k^{-\frac{(pA_1)^2}{1-pA_1}} i_\ell^{-pA_1} d^{-\frac{mpA_1}{1-pA_1}} 
\left(1-\left( \frac{i_k}{i_\ell} \right)^{pA_1/m} \right)^{\frac{mpA_1}{1-pA_1}} \\
&\leq\Eee\, cn(v_\ell,v_k,n) \leq \\
&p A_1^{-1} A_2^{\frac{1}{1-pA_1}} c_m^{-\frac{pA_1}{1-pA_1}} i_k^{-\frac{(pA_1)^2}{1-pA_1}} i_\ell^{-pA_1} d^{-\frac{mpA_1}{1-pA_1}} 
\left(1+\left( \frac{i_k}{i_\ell} \right)^{pA_1/m} \right)^{\frac{mpA_1}{1-pA_1}}.
\end{eqnarray*}
The result follows from the fact that
\[
\left(1 \pm \left( \frac{i_k}{i_\ell} \right)^{pA_1/m} \right)^{\frac{mpA_1}{1-pA_1}}
=1+O \left( \left( \frac{i_k}{i_\ell} \right)^{pA_1/m} \right). 
\]

Finally, consider the case where $k=(1+o(1))\ell$, and thus $i_k/i_l=1+o(1)$. As before, from time 
$$
t^+ = (1+o(1)) \left( \frac {A_2 2^m}{c_m} \right)^{ \frac{1}{1-pA_1}} i_k^{-\frac{pA_1}{1-pA_1}} d^{-\frac{m}{1-pA_1}}
$$
until time $n$, the spheres are disjoint and there is no chance for a common neighbour. At  time $t$ such that $T \le t = o(t^+)$, the spheres overlap to a large extent and~(\ref{eq:ratio}) holds. However, for $\eps>0$ and $t$ such that $\eps t^+ \le t \le t^+$ only a nontrivial fraction of $S(v_\ell,t)$ is contained in $S(v_k,t)$. The above analysis still applies, but in this case instead of an asymptotic result,  we obtain the order result stated in the theorem. 

\bigskip

Finally, let us note that the number of common neighbours is a sum of independent random indicator variables with Bernouilli distribution. The concentration follows from the bound (\ref{lem:Chernoff}).
\end{proof} 

\subsection{Edge length distribution}

Finally, we give the proof of the theorem about the edge length distribution. Remember that a long edge is an edge such that its endpoints are at distance at least $r_\alpha$, where $r_\alpha$ is chosen so that a ball of radius $r_\alpha$ has volume $n^{-\alpha}$. As in the previous subsection, the proof distinguishes three cases, but now the three cases depend on whether the sphere of influence of a vertex has radius greater than $r_\alpha$ (allowing the vertex to receive long edges) at the beginning and the end of its life. 

First, we need to recall a few known results: the behaviour of $N_k=N_k(n)$, the number of vertices of in-degree $k=k(n)$ at time $n$, the number of edges $M=M(n)$ at time $n$, and the upper bound for the size of the influence regions. The following result was proven in~\cite{spa1}.

\begin{thm}[\cite{spa1}]\label{thm:spa1}
Suppose that $pA_1 < 1$. The following holds \aas\ for every $0 \le k \le (n/\log^8 n)^{(pA_1)/(4pA_1+2)}$.
$$
N_k = (1+o(1)) c_k n,
$$
where $c_0 = 1/(1+pA_2)$ and for $k \ge 1$, 
$$
c_k = \frac {p^k}{1+kpA_1+pA_2} \prod_{j=0}^{k-1} \frac {jA_1+A_2}{1+jpA_1+pA_2}.
$$
Moreover, \aas\ 
$$
M = (1+o(1)) \frac{pA_2}{1-pA_1} n.
$$
\end{thm}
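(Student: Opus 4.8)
The plan is to treat both claims by the standard ``mean-field'' recurrence-plus-concentration method. The observation that makes everything tractable is that, for the purpose of tracking in-degrees, the spatial structure is irrelevant: when $v_t$ is born, the probability that it falls into the sphere of influence $S(u,t-1)$ of a fixed vertex $u$ is exactly $|S(u,t-1)| = (A_1\deg^-(u,t-1)+A_2)/(t-1)$ (since $v_t$ is uniform on $S$ and $|S|=1$), and it then links with probability $p$. Hence the probability that a vertex of in-degree $j$ at time $t-1$ gains an in-neighbour at step $t$ equals $p(A_1 j + A_2)/(t-1)$, independent of position. Writing $N_k(t)$ for the number of vertices of in-degree $k$ at time $t$, this yields, for $k\ge 1$,
$$\E[N_k(t)\mid \mathcal F_{t-1}] = N_k(t-1)\left(1 - \frac{p(A_1 k + A_2)}{t-1}\right) + N_{k-1}(t-1)\,\frac{p(A_1(k-1)+A_2)}{t-1},$$
while for $k=0$ the incoming term is replaced by the deterministic $+1$ contributed by the newborn vertex.

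Second, I would solve the recurrence for $\E N_k(t)$ via the ansatz $\E N_k(t) = c_k t + o(t)$. Matching the coefficient of $t$ gives $c_0(1+pA_2)=1$ and, for $k\ge1$, the one-step relation $c_k\,(1+p(A_1k+A_2)) = c_{k-1}\,p(A_1(k-1)+A_2)$, whose iterated solution is precisely the product formula in the statement. To make this rigorous one sets $\E N_k(t) = c_k t + e_k(t)$ and shows by induction on $k$ that the error $e_k(t)$ is of smaller order than $c_k t$ uniformly over the claimed range of $k$. The errors accumulate as $k$ grows, and controlling this accumulation is exactly what forces the upper cut-off $k \le (n/\log^8 n)^{pA_1/(4pA_1+2)}$, beyond which $c_k n$ becomes too small for the expectation to dominate its own fluctuations.

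Third, concentration. Since positions are irrelevant to the degree process, $N_k(t)$ evolves as in ordinary preferential attachment, and I would establish $N_k(n)=(1+o(1))\E N_k(n)$ \aas\ by a martingale argument. Exposing the choices (position and links) of the newborn vertices one step at a time yields a Doob martingale $Z_s = \E[N_k(n)\mid \mathcal F_s]$; the bounded-differences estimate rests on the fact that altering a single step perturbs the final count by only a polylogarithmic amount \wep, which in turn follows because the out-degree of any vertex is a sum of independent indicators of constant expectation, hence $O(\log^2 n)$ \wep\ by Lemma~\ref{lem:Chernoff}. A Bernstein/Azuma-type bound then gives deviations of order $\sqrt n\,\mathrm{polylog}(n)$ with probability at least $1-\exp(-\Theta(\log^2 n))$, which beats $c_k n$ throughout the stated range and survives a union bound over all admissible $k$. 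The claim for $M$ follows the same template from its own recurrence: since $\sum_u \deg^-(u,t-1)=M(t-1)$ over $t-1$ vertices, one has $\E[M(t)\mid\mathcal F_{t-1}] = M(t-1)(1+pA_1/(t-1))+pA_2$, whose linear solution is $\E M(t)=(1+o(1))\tfrac{pA_2}{1-pA_1}t$, and concentration is obtained by bounding the per-step increment (the out-degree of $v_t$) \wep\ and applying the same inequality.

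I expect the main obstacle to be the coupling of the second and third steps at the sharp endpoint of the range: one must propagate the relative error $e_k(t)/(c_k t)$ through the induction on $k$ while simultaneously keeping the concentration window narrow enough that the fluctuations stay below $c_k n$ for \emph{every} $k$ up to the threshold, and the two requirements meet exactly at the exponent $pA_1/(4pA_1+2)$. A secondary technical point worth handling carefully is the volume cap $|S(v,t)|\le 1$: for very young or very high-degree vertices the quantity $(A_1\deg^-(v,t)+A_2)/t$ can exceed $1$, so one must verify that such exceptional vertices are negligible in number and affect neither the leading constants $c_k$ nor the value of $M$.
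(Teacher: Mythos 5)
First, a caveat on the comparison: the paper does not prove this statement at all --- it is imported verbatim from \cite{spa1} (``The following result was proven in~\cite{spa1}''), so the only benchmark is the proof in that reference, which follows essentially the route you describe: an exact recurrence for $\E N_k(t)$, solving for the coefficients $c_k$, and a concentration argument. Your first two steps are sound. The per-vertex link probability $p(A_1j+A_2)/(t-1)$ is indeed position-free, your recurrence (with the deterministic $+1$ for $k=0$) is correct, and your coefficient relations $c_0(1+pA_2)=1$ and $c_k\bigl(1+p(A_1k+A_2)\bigr)=c_{k-1}\,p\bigl(A_1(k-1)+A_2\bigr)$ reproduce exactly the stated product formula; likewise $\E[M(t)\mid\mathcal F_{t-1}]=M(t-1)\bigl(1+pA_1/(t-1)\bigr)+pA_2$ and its linear solution $pA_2 t/(1-pA_1)$ are correct, and your remark about the volume cap $|S(v,t)|\le 1$ is the right secondary point to flag.

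The concentration step, however, contains a step that fails as written. You justify the polylogarithmic bounded-differences bound by saying the out-degree of a newborn ``is a sum of independent indicators of constant expectation.'' It is not: the indicators of the events $v_t\in S(u,t-1)$, ranging over $u$, all depend on the \emph{single} random position of $v_t$, so they are strongly dependent wherever spheres overlap. What you actually need is a \wep\ bound on the number of spheres of influence covering a uniformly random point; since the total volume of all spheres is only $O(1)$ in expectation, a Markov-type argument gives merely an inverse-polynomial tail, and upgrading this to a polylogarithmic \wep\ bound requires stratifying vertices by radius and controlling the degree--position correlations --- this is genuine work (and is done by separate lemmas in \cite{spa1}). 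A second point you should make explicit is why a single-step perturbation does not cascade: under the natural coupling, changing step $s$ alters the in-degree trajectories only of the vertices $v_s$ links to in either copy, because whether a later newborn links to $w$ depends only on $w$'s own in-degree and fresh randomness; each affected vertex then shifts at most one unit between bins of $N_k$. With both ingredients your Azuma bound of order $\sqrt{n}\,\mathrm{polylog}(n)$ would in fact beat $c_kn$ on a range \emph{larger} than $(n/\log^8 n)^{pA_1/(4pA_1+2)}$, which also shows that your claim that the two requirements ``meet exactly'' at that exponent is unsubstantiated --- the stated cut-off reflects the particular error bookkeeping in \cite{spa1}, not the balance point of your argument. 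These gaps are fillable, but as written the concentration proof does not go through.
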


Note that 
\begin{eqnarray*}
c_k &=& \frac{1}{pA_1} \cdot \frac { \prod_{j=0}^{k-1} \left(j + \frac{A_2}{A_1} \right) } { \prod_{j=0}^{k} \left(j + \frac{A_2}{A_1} +\frac{1}{pA_1} \right) } \\
&=& \frac{1}{pA_1} \cdot \frac { \Gamma \left(k + \frac{A_2}{A_1} \right) ~~/~~ \Gamma \left(\frac{A_2}{A_1} \right) } { \Gamma \left(k+1 + \frac{A_2}{A_1} +\frac{1}{pA_1} \right) ~~/~~ \Gamma \left(\frac{A_2}{A_1} +\frac{1}{pA_1} \right) } \ .
\end{eqnarray*}
Suppose now that $k=k(n)$ tends to infinity together with $n$. Using Stirling's asymptotic approximation of the Gamma function ($\Gamma(z) = (1+o(1)) \sqrt{2 \pi} z^{z-1/2} e^{-z}$) we can take $c_k$ to be:
$$
c_k = \frac{1}{pA_1} \cdot \frac {\Gamma \left(\frac{A_2}{A_1} +\frac{1}{pA_1} \right)} {\Gamma \left(\frac{A_2}{A_1} \right)} \ k^{-1-\frac{1}{pA_1}}\ ,
$$
and the following useful corollary is proved.

\begin{corollary}\label{cor:n_k}
Suppose that $pA_1 < 1$.  Let $\omega=\omega(n)$ be any function tending to infinity with $n$. The following holds \aas\ for every $\omega \le k \le (n/\log^8 n)^{(pA_1)/(4pA_1+2)}$.
$$
N_k = (1+o(1)) c k^{-1-\frac{1}{pA_1}} n,
$$
where  
\begin{equation}
\label{eqn:c}
c = \frac{1}{pA_1} \cdot \frac {\Gamma \left(\frac{A_2}{A_1} +\frac{1}{pA_1} \right)} {\Gamma \left(\frac{A_2}{A_1} \right)}.
\end{equation}
\end{corollary}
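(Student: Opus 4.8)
The plan is to obtain the result directly from Theorem~\ref{thm:spa1} together with an asymptotic simplification of the constant $c_k$. Theorem~\ref{thm:spa1} already gives that, \aas, $N_k = (1+o(1)) c_k n$ for every $k$ in the range $0 \le k \le (n/\log^8 n)^{(pA_1)/(4pA_1+2)}$, with $c_k$ the explicit finite product. So the only work is to show that, uniformly over the narrower range $\omega \le k \le (n/\log^8 n)^{(pA_1)/(4pA_1+2)}$, one has $c_k = (1+o(1)) c\, k^{-1-1/(pA_1)}$ with $c$ as in~(\ref{eqn:c}); multiplying by $n$ then yields the claim, and the \aas\ qualifier is inherited unchanged from Theorem~\ref{thm:spa1}.

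First I would rewrite the product defining $c_k$ as a ratio of Gamma functions. Writing $a=A_2/A_1$ and $b=1/(pA_1)$, the numerator $\prod_{j=0}^{k-1}(j+a)$ telescopes to $\Gamma(k+a)/\Gamma(a)$ and the denominator $\prod_{j=0}^{k}(j+a+b)$ to $\Gamma(k+1+a+b)/\Gamma(a+b)$, giving
$$
c_k = \frac{1}{pA_1} \cdot \frac{\Gamma(a+b)}{\Gamma(a)} \cdot \frac{\Gamma(k+a)}{\Gamma(k+1+a+b)},
$$
which is exactly the expression displayed just before the corollary. The constant prefactor $\frac{1}{pA_1}\cdot\frac{\Gamma(a+b)}{\Gamma(a)}$ is precisely $c$ from~(\ref{eqn:c}).

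Next I would extract the $k$-dependence from the ratio $\Gamma(k+a)/\Gamma(k+1+a+b)$. Applying Stirling's formula $\Gamma(z)=(1+o(1))\sqrt{2\pi}\,z^{z-1/2}e^{-z}$ to both factors, equivalently the standard ratio asymptotic $\Gamma(z+\alpha)/\Gamma(z+\beta)=z^{\alpha-\beta}(1+o(1))$ as $z\to\infty$ with exponents $\alpha=a$ and $\beta=1+a+b$, yields
$$
\frac{\Gamma(k+a)}{\Gamma(k+1+a+b)} = (1+o(1))\, k^{a-(1+a+b)} = (1+o(1))\, k^{-1-\frac{1}{pA_1}}.
$$
Combining the two displays gives $c_k=(1+o(1))\,c\,k^{-1-1/(pA_1)}$, and hence $N_k=(1+o(1))\,c\,k^{-1-1/(pA_1)}\,n$.

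The one point requiring genuine care is the uniformity of the $(1+o(1))$ factor across the whole range, since $k$ is itself allowed to grow as a power of $n$. The refined Stirling expansion gives relative error $O(1/k)$ in the Gamma ratio (with $a,b$ fixed constants), so the hypothesis $k\ge\omega\to\infty$ is exactly what forces this error to be $o(1)$ uniformly; this is also the reason the corollary restricts to $k\ge\omega$ rather than inheriting the full range $k\ge 0$ of Theorem~\ref{thm:spa1}, as the power-law form fails for bounded $k$. Since both the $O(1/k)$ error and the $o(1)$ term from Theorem~\ref{thm:spa1} are uniform over the stated range, their product is again $(1+o(1))$, which completes the argument. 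I do not anticipate any serious difficulty here: the statement is essentially a bookkeeping consequence of the known result combined with a single, uniform application of Stirling's formula.
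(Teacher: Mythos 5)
Your proposal is correct and follows essentially the same route as the paper: Theorem~\ref{thm:spa1} supplies $N_k=(1+o(1))c_k n$ on the stated range, $c_k$ is rewritten as $\frac{1}{pA_1}\cdot\frac{\Gamma(a+b)}{\Gamma(a)}\cdot\frac{\Gamma(k+a)}{\Gamma(k+1+a+b)}$, and Stirling's approximation turns the Gamma ratio into $(1+o(1))k^{-1-1/(pA_1)}$, with $k\ge\omega\to\infty$ ensuring the error is $o(1)$. Your added remark on uniformity of the error over the range of $k$ is a point the paper leaves implicit, but it is the same argument.
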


In~\cite{spa1}, it was proved that \aas\ for all vertices we have that $\deg^-(v_i,n) = O( (\log^2 n) (n/i)^{pA_1} )$, provided that $v_i$ was born at time $i$. Now, with Theorem~\ref{thm:deg} in hand, we get a stronger result, namely that \aas~for all $i\le t\leq n$
$$
\deg^-(v_i,t) = O \left( (\omega \log n) \left( \frac ti \right)^{pA_1} \right),
$$
where $\omega=\omega(n)$ is any function tending to infinity together with $n$. (Indeed, for a contradiction suppose that $\deg^-(v_i,t) \ge (2 \omega \log n) (t/i)^{pA_1}$ for some value of $t$. Theorem~\ref{thm:deg} implies that $\deg^-(v_i,i) = (2+o(1)) \omega \log n$ which is clearly a contradiction.) This implies the following result. 

\begin{thm} \label{thm:birth}
Suppose that $pA_1 < 1$, and $\omega$ is a function that goes to infinity together with $n$. The following holds \aas\ for every vertex born at time $i$.
$$
|S(v_i, t)| = O \left( \frac{\omega \log n} {i} \right) ,
$$
\end{thm}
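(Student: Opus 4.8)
The plan is to read off the bound directly from the definition of the sphere of influence together with the uniform degree estimate established immediately before the statement. Recall from~(\ref{above}) that the volume satisfies $|S(v_i,t)| = (A_1 \deg^-(v_i,t) + A_2)/t$ (or equals $1$ when the right-hand side exceeds $1$, in which case the claimed bound holds trivially, since $\omega \log n / i \ge 1$ in that regime). The idea is simply to substitute the \aas\ estimate $\deg^-(v_i,t) = O\big((\omega \log n)(t/i)^{pA_1}\big)$, valid for all $i \le t \le n$, into this formula. Because that degree bound already holds \aas\ uniformly over all vertices and all admissible $t$, the resulting volume bound inherits the same uniformity, so no new probabilistic argument is required.

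Carrying out the substitution, the main term becomes
\[
\frac{A_1 \deg^-(v_i,t)}{t} = O\left( (\omega \log n)\, \frac{t^{pA_1-1}}{i^{pA_1}} \right).
\]
The key observation is that $pA_1 < 1$, so the exponent $pA_1 - 1$ is negative and $t^{pA_1-1}$ is decreasing in $t$; hence for $t \ge i$ one has $t^{pA_1-1} \le i^{pA_1-1}$, which collapses the right-hand side to $O(\omega \log n / i)$. The additive contribution is controlled by $A_2/t \le A_2/i = O(\omega \log n / i)$, using that $\omega \log n \to \infty$ so the constant $A_2$ is eventually dominated. Combining the two pieces then gives $|S(v_i,t)| = O(\omega \log n / i)$.

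There is no genuine obstacle here: all the substantive work was already done in Theorem~\ref{thm:deg} and in the degree bound preceding the statement. The only points deserving a little care are the monotonicity step (noting that the worst case occurs at $t = i$, precisely where the negative exponent makes $t^{pA_1-1}$ largest), and checking that both the additive term $A_2/t$ and the cap at volume $1$ cause no difficulty. The implied constant in the $O$-notation depends only on $A_1$, $A_2$, and the constant hidden in the degree bound, and is in particular uniform over $i$ and $t$.
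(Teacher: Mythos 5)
Your proposal is correct and follows essentially the same route as the paper: the paper presents this theorem as an immediate consequence of the \aas\ uniform bound $\deg^-(v_i,t) = O\big((\omega \log n)(t/i)^{pA_1}\big)$ (valid for all vertices and all $i \le t \le n$) derived just before the statement from Theorem~\ref{thm:deg}, substituted into the volume formula~(\ref{above}), with $pA_1<1$ making $t^{pA_1-1}$ maximal at $t=i$. Your explicit treatment of the additive $A_2/t$ term and of the cap at volume $1$ covers the only details the paper leaves implicit, and both are handled correctly.
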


The results given above are used in the proof of Theorem~\ref{thm:edge}, which we are now ready to give.

\begin{proof}[Proof of Theorem~\ref{thm:edge}]
Suppose first that $\alpha>1$. Since the sphere of influence of every vertex at every time of the process is (deterministically) at least $A_2/n \gg n^{-\alpha}$, ``long'' edges can occur at every step of the process. A vertex $v$ will receive a short edge precisely when the new vertex falls within a ball of radius $r_\alpha$ around $v$, and thus automatically falls within the sphere of influence of $v$, and then links to $v$. The probability that this happens equals $pn^{-\alpha}$. Thus, the expected number of short edges pointing to a vertex  born at time $i$  is $p n^{-\alpha} (n-i)$, and the total number of short edges is $(1+o(1)) p n^{2-\alpha} / 2 = o(n)$ and so is negligible compared to the total number of edges. We conclude that \aas\ almost all edges are long, and the result holds by Theorem~\ref{thm:spa1}.

Suppose now that $1- \frac{pA_1}{4pA_1+2} < \alpha < 1$. Let $e_v(\alpha)$ be the number of long edges pointing to $v$, that is:
$$
e_v(\alpha) = \Big| \big\{w\in N^-(v): d(v,w) \geq r_{\alpha}\big\} \Big|,
$$
where $N^-(v)$ is the in-neighbourhood of vertex $v$. 

For a vertex $v$ to receive an edge of length greater than $r_{\alpha}$ at time $t$, its region of influence must have radius at least $r_\alpha$, and thus have volume $|S(v,t)|\geq n^{-\alpha}$. Key to the proof is Theorem~\ref{thm:deg} and its conclusion that the regions of influence tend to be shrinking.  

Let $\omega=\omega(n)$ be any function increasing with $n$. First, we only consider vertices whose final degree is at least $\omega\log n$. This is enough to get a lower bound for the number of long edges. Later we will show that the contribution of the remaining edges is negligible. Consider a vertex $v$ with final degree $k=\deg^-(v,n)\geq \omega\log n$. It follows from Theorem~\ref{thm:deg} that \aas\ for every vertex $v$ of degree $k\geq \omega \log n$ at time $n$, 
\[
\deg^-(v,t)=(1+o(1)) k \left(\frac{t}{n}\right)^{pA_1}
\]
for all $t_k \le t \le n$, where
$$
t_k = n \left( \frac {\omega \log n}{k} \right)^{\frac {1}{pA_1}}.
$$
(Note that $\deg(v,t_k) = (1+o(1)) \omega \log n$.) Therefore, we may assume, without loss of generality, that for all $t_k \le t \le n$
$$
|S(v,t)| = (1+o(1)) A_1 k n^{-pA_1} t^{pA_1-1}.
$$

We distinguish  three possible classes of vertices, based on their final degree: vertices of high final degree can receive long edges from time $t_k$ until the end of the process, $t=n$ (Case~1); vertices with final degree in a mid-range can receive long edges from time $t_k$ until some time $t_k^*$, $t_k<t_k^*<n$ (Case~2); and vertices with small final degree can never receive long edges after time $t_k$ (Case~3). 

The cut-off values of the three cases are 
$$
k_{\min}=\left( \frac{n^{1-\alpha}}{A_1} \right)^{pA_1} \left( \omega \log n \right)^{1-pA_1} 
$$
and $k_{\max} = \frac {n^{1-\alpha}}{A_1}$. Consider a vertex $v$ of degree $k$.

\smallskip

{\bf Case 1.} Suppose that $k \ge k_{\max}$. Note that this implies that 
\[
|S(v,n)| = (1+o(1)) A_1 k / n \ge (1+o(1)) n^{-\alpha},
\]
so for any constant $\eps>0$, and for any time $t$ in the range $t_k \le t \le (1-\eps) n$,  the sphere of influence of $v$ has radius greater than $r_\alpha$. This implies that $v$ has an opportunity to receive long edges from time $t_k$ until the end of the process, or very close to it.

For $t_k\leq t \leq n$, the probability that $v$ receives a short edge (edge from a vertex within distance $r_\alpha$) equals $p \min\{ n^{-\alpha}, |S(v,t)| \} = (1+o(1)) pn^{-\alpha}$. Moreover, these events are independent. Thus, \wep\ the number of short edges is  
\[
(1+o(1))pn^{-\alpha}(n-t_k) = (1+o(1)) p n^{1-\alpha},
\]
where the last step uses the fact that $t_k=o(n)$ in this case. 

The degree of $v$ at time $t_k$ is $O(\omega \log n)$, so we have that \wep
\begin{eqnarray*}
e_v(\alpha) &=& \deg^-(v,n) - (1+o(1)) pn^{1-\alpha} + O(\omega \log n) = (1+o(1)) (k - pn^{1-\alpha}) \\
&\ge& (1-pA_1+o(1)) \frac{n^{1-\alpha}}{A_1}.
\end{eqnarray*}
Note that if $k \ge \omega n^{1-\alpha}$, then \wep\ almost all edges pointing to $v$ are long.

\smallskip

{\bf Case 2.}  Let $\eps>0$ be some (arbitrarily small) constant.  Suppose that $(1+\eps)k_{\min}\leq k\leq (1-\eps) k_{\max}$. The upper bound on $k$ implies that $|S(v,n)| \leq (1-\eps+o(1)) n^{-\alpha}$ so there is no chance for $v$ to receive long edges near the end of the process. On the other hand, it follows from the lower bound on $k$ that $|S(v,t_k)| \geq (1+\eps -o(1)) n^{-\alpha}$ so  if the new vertex at time $t_k$ falls within $S(v_t,k)$ , there is a positive probability that a long edge to $v$ is created.

Let 
$$
t^*_k = \left( A_1 k n^{\alpha-pA_1} \right)^{\frac{1}{1-pA_1}}.
$$
Note that $|S(v,t_k^*)|=(1+o(1))n^{-\alpha}$. Thus, the influence region of $v$ has radius greater than $r_\alpha$ from time $t_k$ to $(1-\delta) t^*_k$, and radius less than $r_\alpha$ from time $(1+\delta) t^*_k$ to $n$, for some small $\delta >0$. 

Thus, by a similar argument to the previous case, we obtain:
\begin{eqnarray*}
e_v(\alpha) &\ge& (1+o(1)) \sum_{t=t_k}^{(1-\delta)t_k^*} p \left( A_1 k n^{-pA_1} t^{pA_1-1} - n^{-\alpha} \right) \\
&=& (1-O(\delta)) \left( k n^{-pA_1} (t_k^*)^{pA_1} - p (t^*_k) n^{-\alpha} \right) \\
&=& (1-O(\delta)) A_1^{\frac {pA_1}{1-pA_1}} k^{\frac {1}{1-pA_1}} n^{\frac{-pA_1 (1-\alpha)}{1-pA_1}} \left( 1-pA_1 \right).
\end{eqnarray*}
Similarly, we get that $e_v \le (1+O(\delta)) A_1^{\frac {pA_1}{1-pA_1}} k^{\frac {1}{1-pA_1}} n^{\frac{-pA_1 (1-\alpha)}{1-pA_1}} \left( 1-pA_1 \right)$ and so
$$
e_v (\alpha)= (1+o(1)) A_1^{\frac {pA_1}{1-pA_1}} k^{\frac {1}{1-pA_1}} n^{\frac{-pA_1 (1-\alpha)}{1-pA_1}} \left( 1-pA_1 \right),
$$
by taking $\delta \to 0$.

\smallskip

{\bf Case 3.} Finally, suppose that $\omega \log n \le k \le (1-\eps)k_{\min}$ for some $\eps >0$. Since $|S(v,t_k)| \le (1-\eps +o(1)) n^{-\alpha}$, the influence region has radius smaller than $r_\alpha$ from time $t_k$ until the end of the process. Thus, for such vertices, all edges they receive in this time slot are short. Thus the only long edges $v$ can receive are those received before $t^*_k$, so $e_v(\alpha)= O(\omega \log n)$. Trivially, the same property holds for any vertex of degree smaller than $\omega \log n$. 

\smallskip

In order to obtain upper and lower bounds on the total number of long edges, we can use Theorem~\ref{thm:spa1} and its corollary (Corollary~\ref{cor:n_k}) to calculate the number of long edges pointing to vertices of final degree larger than $k_{\min}$ (Cases~1 and~2). Let $c$ be as defined in Equation (\ref{eqn:c}), and let $K=(n/\log^8 n)^{(pA_1)/(4pA_1+2)}$. By Corollary \ref{cor:n_k}, $K$ is the upper bound on the values of $k$ for which we have concentration for $N_k$. Note that the bounds on $\alpha$ imply that $k_{\max} \ll K$, and thus $\sum_{k\ge k_{\max}} k^{-\gamma}=(1+o(1))\sum_{k_{\max}\le k \leq K} k^{-\gamma}$ for all $\gamma >1$.

The number of long edges to vertices of the first type (Case~1) is \aas\ equal to
\begin{eqnarray*}
E_1 &=& (1+o(1)) \sum_{k \ge k_{\max} } N_k \left( k - pn^{1-\alpha} \right) \\ 
&= & (1+o(1)) \sum_{k_{\max} \le k\le K} \left( c k^{-1-\frac {1}{pA_1}} n \right) \left( k - pn^{1-\alpha} \right) \\ 
&=& (1+o(1)) \left( cn \sum_{k \ge k_{\max} } k^{-\frac{1}{pA_1}} - cpn^{2-\alpha} \sum_{k \ge k_{\max} } k^{-1-\frac{1}{pA_1}} \right) \\
&=& (1+o(1)) \left( cn \ \frac{(k_{\max})^{ \frac{pA_1-1}{pA_1}}}{\frac{1-pA_1}{pA_1}}  - cpn^{2-\alpha} \frac { (k_{\max})^{- \frac{1}{pA_1}} }{\frac {1}{pA_1}} \right) \\
&=& c \ \frac { pA_1^{\frac {1}{pA_1}}}{1-pA_1} n^{2-\frac {1}{pA_1} + \alpha \frac {1-pA_1}{pA_1}} \big( 1 - (pA_1)(1-pA_1) \big).
\end{eqnarray*}

The number of long edges to vertices of the second type (Case~2) is \aas\ equal to
\begin{eqnarray*}
E_2 &=& (1+o(1)) \sum_{k =k_{\min}}^{k_{\max}} \left( c k^{-1-\frac {1}{pA_1}} n \right) \left( A_1^{\frac {pA_1}{1-pA_1}} k^{\frac {1}{1-pA_1}} n^{\frac{-pA_1 (1-\alpha)}{1-pA_1}} \left( 1-pA_1 \right) \right) \\
&=& (1+o(1)) c \ A_1^{\frac{pA_1}{1-pA_1}} (1-pA_1) n^{1 - \frac {pA_1 (1-\alpha)}{1-pA_1}} \sum_{k =k_{\min}}^{k_{\max}} k^{-1 + \frac {2pA_1-1}{(1-pA_1)pA_1}}.
\end{eqnarray*}
(Technically, to get a lower bound of $E_2$ we should sum over $k_{\min} (1+\eps) \le k \le k_{\max} (1-\eps)$ and sum over $k_{\min} (1+\eps) \le k \le k_{\max}$ to get an upper bound. Since the error in this summation is $(1+O(\eps))$, the result holds by taking $\eps \to 0$.)

Since $1/2 < pA_1 < 1$, the exponent of $k$ in the summation is in the interval $(-1,0)$, and thus the behaviour of the summation is determined by its upper bound $k_{\max}$. This leads to 
\begin{eqnarray*}
E_2 &=& (1+o(1)) c \ A_1^{\frac{pA_1}{1-pA_1}} (1-pA_1) n^{1 - \frac {pA_1 (1-\alpha)}{1-pA_1}} 
\frac{ (k_{\max})^{\frac {2pA_1-1}{(1-pA_1)pA_1}}} {\frac {2pA_1-1}{(1-pA_1)pA_1}}\\
&=& (1+o(1)) c \ \frac { pA_1^{\frac {1}{pA_1}}}{1-pA_1} n^{2-\frac {1}{pA_1} + \alpha \frac {1-pA_1}{pA_1}} \cdot \frac {(1-pA_1)^3}{2pA_1-1} A_1^{\frac {1-2pA_1}{(1-pA_1)pA_1}}.
\end{eqnarray*}
Since $E_1$ and $E_2$ are of the same order, we can take $E_1+E_2$ as a lower bound for $e(\alpha)$.

In order to obtain an upper bound, we consider edges to vertices that are in Case~3, that is, those that have final degree at most $k_{\min}$. It follows from Theorem~\ref{thm:birth} that any vertex that is able to receive long edges directed to vertices with small final degree has to have a time of birth $i\leq i_{\max}=  \omega n^\alpha \log n$. There are obviously at most $i_{\max}$ of such vertices, and each of them has $O(\omega\log n)$ long edges. So the number of long edges that we did not count yet is at most:
\[
E_3= O \big( (\omega n^\alpha \log n) (\omega \log n) \big) = n^{\alpha+o(1)}.
\]
Since $E_3$ is of smaller order than $E_1+E_2$, the result follows.
\end{proof}

For the proof of Theorem \ref{thm:edge-exp}, we use a large part of the previous proof. 

\begin{proof}[Proof of Theorem \ref{thm:edge-exp}]
For this theorem, we consider the expected value of $e(\alpha)$. Thus, we can use the expected values of $N_k$, and do not need to consider the cut-off on the values of $k$ for which the values of $N_k$ are concentrated. In~\cite{spa1}, it was shown that 
\[
\E (N_k)= (1+o(1)) c k^{-1-\frac{1}{pA_1}} n, \text{ for all } k\geq k_{max}. 
\]

Suppose first that $1/2< pA_1< 1$ and $1- pA_1 < \alpha < 1$.
Consider the proof of Theorem \ref{thm:edge}. The three cases of this proof still hold as before; let $k_{\min}$ and $k_{\max}$ be as defined in this proof. As explained in this proof, concentration for the values of $N_k$ hold only up to degree $K=n^{pA_1/(4pA_1+2)+o(1)}$. This affects the computation of $E_1$.  However, in this proof we only consider the expected value of $e(\alpha)$, so by linearity of expectation, in the computation of $E_1$ we can use the expected values of the $N_k$.  This leads to the following expression for the expected number of long edges to vertices of the first type (Case~1) :
\begin{eqnarray}
\label{eqn:E(E1)}
\E (E_1) &=& (1+o(1)) \sum_{k \ge k_{\max} } \E (N_k) \left( k - pn^{1-\alpha} \right) \notag\\ 
&= & (1+o(1)) \sum_{k\ge k_{\max}  } \left( c k^{-1-\frac {1}{pA_1}} n \right) \left( k - pn^{1-\alpha} \right) \notag\\ 
&=& c \ \frac { pA_1^{\frac {1}{pA_1}}}{1-pA_1} n^{2-\frac {1}{pA_1} + \alpha \frac {1-pA_1}{pA_1}} \big( 1 - (pA_1)(1-pA_1) \big),
\end{eqnarray}
where $c$ is as defined in Equation (\ref{eqn:c}).

For the computation of $E_2$, we should note that we may not have concentration of $N_k$ for the values of $k$ close to  $k_{\max}$. However, we can a similar calculation to that used in  
the proof of Theorem~\ref{thm:edge}, using the expected values of the $N_k$, to obtain that $\E (E_2)=\Theta(n^{2-\frac {1}{pA_1} + \alpha \frac {1-pA_1}{pA_1}})$. 

The argument that $E_3$ is negligible compared to $\E (E_1)$ and $\E (E_2)$, as laid out in the proof of Theorem~\ref{thm:edge}, still holds here. Thus, we have that
\[
\E (e(\alpha)) = \Theta ( n^{2-\frac{1}{pA_1}+\alpha \frac{1-pA_1}{pA_1}})
\]
The result follows by taking the logarithm.

Next, consider the case where $1/2 < pA_1 <1$ and $\alpha < 1-pA_1$. It follows from Theorem \ref{thm:deg}, and it was also shown in \cite{spa1}, that \wep\ the maximum in-degree in a graph produced by the SPA model is at most $K_M=O( n^{pA_1}\log^4 n)$. Since $k_{\max}=n^{1-\alpha}/A_1 \gg n^{pA_1}$, \wep\ no vertices are in Case~1, so no vertices can receive long edges until the edge of the process.

For the vertices that are in Case 2, we can apply the same calculation as in the proof of Theorem~\ref{thm:edge}, while taking the expected values of the $N_k$ as in the previous case. Since \wep\ $K_M$ is an upper bound on the maximum degree, the expected number of vertices of degree greater than $K_M$ is at most $n \exp(-\Theta(\log^2 n))$. Hence, the expected number of long edges to such vertices, is at most $n^2 \exp(-\Theta(\log^2 n)) = o(1)$.  The expected number of long edges to vertices of the second type (Case~2) therefore is equal to
\begin{eqnarray*}
\E (E_2) &\leq & (1+o(1)) \sum_{k =k_{\min}}^{K_M} \left( c k^{-1-\frac {1}{pA_1}} n \right) \left( A_1^{\frac {pA_1}{1-pA_1}} k^{\frac {1}{1-pA_1}} n^{\frac{-pA_1 (1-\alpha)}{1-pA_1}} \left( 1-pA_1 \right) \right) \\
&=& (1+o(1)) c \ A_1^{\frac{pA_1}{1-pA_1}} (1-pA_1) n^{1 - \frac {pA_1 (1-\alpha)}{1-pA_1}} \sum_{k =k_{\min}}^{K_M} k^{-1 + \frac {2pA_1-1}{(1-pA_1)pA_1}}.
\end{eqnarray*}
For a lower bound on $\E (E_2)$, we should sum over $k_{\min} (1+\eps) \le k \le n^{pA_1}$.

Since $pA_1 > 1/2$, as before the exponent of $k$ in the summation  is determined by its upper bound $K_M=O(n^{pA_1}\log^4 n)$. This leads to 
\begin{eqnarray*}
\E (E_2) &\leq & (1+o(1)) c \ A_1^{\frac{pA_1}{1-pA_1}} (1-pA_1) n^{1 - \frac {pA_1 (1-\alpha)}{1-pA_1}} \frac{ (K_M)^{\frac {2pA_1-1}{(1-pA_1)pA_1}}} {\frac {2pA_1-1}{(1-pA_1)pA_1}}\\
&=& g(n) n^{\frac{pA_1\alpha}{1-pA_1}},
\end{eqnarray*}
for some function $g$ of order $g(n)=\Theta(K_M/n^{pA_1})=\Theta(\log^4 n)$. For the lower bound on $\E (E_2)$, we have the same summation, but with upper bound $n^{pA_1}$ instead of $K_M$, and thus 
$\E (E_2) = \Omega (n^{\frac{pA_1\alpha}{1-pA_1}})$.

Since $\log(g(n))= o(\log n)$, we can combine lower and upper bound to see that
\[
\frac{\log \E (E_2)}{\log n} =  \frac{pA_1\alpha}{1-pA_1}+ o(1).
\]

Finally, consider the vertices in Case 3. Here, the exact same argument as given in the proof of Theorem~\ref{thm:edge}  can be used to show that
\[
\E (E_3)= O \big( (\omega n^\alpha \log n) (\omega \log n) \big) = n^{\alpha+o(1)}.
\]
Since this is of smaller order than $\E (E_2)$, the result follows.

Finally, consider the case where $pA_1\leq 1/2$. For $\alpha \in (1-pA_1,1)$ we have the exact same expression for $\E(E_1)$ as for the case where $pA_1>1/2$, as given in Equation (\ref{eqn:E(E1)}). Thus
\[ 
\E (E_1)= \Theta (n^{2-\frac {1}{pA_1} + \alpha \frac {1-pA_1}{pA_1}}) = o(n^{\alpha}),
\]
where the last step follows since
\[
2-\frac {1}{pA_1} + \alpha \frac {1-pA_1}{pA_1} =1-(1-\alpha)\left(\frac{1-pA_1}{pA_1}\right)<\alpha.
\]

For $\alpha \in (0,1-pA_1)$ we have that \wep\ $E_1=0$, so $\E (E_1)=\exp(-\Theta(\log^2 n ))$.

For $E_2$, we have the same sum as before: let $K^*=k_{\max}$ if $\alpha\in (1-pA_1,1)$, and $K^*$, an (almost sure) upper bound on the maximum degree, otherwise.  Then
\begin{eqnarray*}
\E (E_2) &= & (1+o(1)) \sum_{k =k_{\min}}^{K^*} \left( c k^{-1-\frac {1}{pA_1}} n \right) \left( A_1^{\frac {pA_1}{1-pA_1}} k^{\frac {1}{1-pA_1}} n^{\frac{-pA_1 (1-\alpha)}{1-pA_1}} \left( 1-pA_1 \right) \right) \\
&=& (1+o(1)) c \ A_1^{\frac{pA_1}{1-pA_1}} (1-pA_1) n^{1 - \frac {pA_1 (1-\alpha)}{1-pA_1}} \sum_{k =k_{\min}}^{K^*} k^{-1 + \frac {2pA_1-1}{(1-pA_1)pA_1}}.
\end{eqnarray*}

Since $pA_1 < 1/2$, the exponent of $k$ in the summation  in this case is determined by its lower bound 
\[
k_{\min}=\left( \frac{n^{1-\alpha}}{A_1} \right)^{pA_1} \left( \omega \log n \right)^{1-pA_1}.
\]
This leads to 
$$
\E (E_2) \leq  (1+o(1)) c \ A_1^{\frac{pA_1}{1-pA_1}} (1-pA_1) n^{1 - \frac {pA_1 (1-\alpha)}{1-pA_1}} \frac{ (k_{\min})^{\frac {2pA_1-1}{(1-pA_1)pA_1}}} {\frac {2pA_1-1}{(1-pA_1)pA_1}} = o(n^{\alpha}),
$$
where the last step follows since the exponent of $(\omega \log n)$ in $(k_{\min})^{\frac {2pA_1-1}{(1-pA_1)pA_1}} $ equals
\[
(1-pA_1) \frac {(2pA_1-1)}{(1-pA_1)pA_1} < 0.
\]

Finally, the same estimate as before can be used to show that $E_3\leq n^{\alpha +o(1)}$, and thus $\log(\E (e(\alpha))/\log n \le \alpha +o(1)$.

For the lower bound, note that all volumes of influence up to time $T=(A_1/2)n^{\alpha}$ have (deterministically) volume at least $2n^{\alpha}$. Therefore, a positive fraction of all edges generated until time $T$ are long, and so \aas\ $\Omega(n^{\alpha})$ is a lower bound for the number of long edges and the theorem is finished.
\end{proof}


\begin{thebibliography}{1}

\bibitem{spa1} W.\ Aiello, A.\ Bonato, C.\ Cooper, J.\ Janssen, and  P.\ Pra\l at, A spatial web graph model with local influence regions,  Internet Mathematics {5} (2009), 175--196. 

\bibitem{ba}  A.\ Barab\'{a}si and R.\ Albert, Emergence of scaling in random networks, Science {286} (1999), 509--512.

\bibitem{BichtEat80} J.~Bichteler and E.~Eaton, The combined use of bibliographic coupling and cocitation for document retrieval, JASIST 31(4) (1980), 278--284.

\bibitem{geo-protean} A.~Bonato, J.~Janssen, and P.~Pra\l at, Geometric protean graphs,  {Internet Mathematics} {8} (2012), 2--28. 

\bibitem{wosn} A.~Bonato, J.~Janssen, and P.~Pra\l at, The geometric protean model for on-line social networks, In: {Proc.~8th Workshop on Algorithms and models of the Web graph, (WAW 2010)},  R.~Kumar, R.~Sivakumar (eds.), Springer LNCS 6516, 2010, pp.~110--121.

\bibitem{Bradonjic09} M.\ Bradonjic, A.\ Hagberg, and A.G.\ Percus, The structure of geographical threshold graphs, {Internet Mathematics} {4} (2009), 113--139. 

\bibitem{spa_cfp} C.~Cooper, A.~Frieze, and P.~Pra\l{}at, Some typical properties of the Spatial Preferred Attachment model, In: {Proc.~9th Workshop on Algorithms and Models for the Web Graph (WAW 2012)}, Springer LNCS {7323}, 2012, pp.~29--40.

\bibitem{DeanHenzinger99} J.~Dean and M.R.~Henzinger, Finding related pages in the World Wide Web, {\em Computer networks}, 31(11--16):1467--1479, 1999.


\bibitem{ferretti} L.~Ferretti and M.~Cortelezzi, Preferential attachment in growing spatial networks, {Phys. Rev. E}, 84, (2011), 1:016103.

\bibitem{FFV06}  A.~Flaxman, A.M.~Frieze, and J.~Vera, A geometric preferential attachment model of networks, {Internet Mathematics}, 3(2), (2006), 187--206.

\bibitem{FFV08} A.~Flaxman, A.M.~Frieze, and J.~Vera, A geometric preferential attachment model of networks II, {\em Internet Mathematics}, 4(1) (2008), 87--111

\bibitem{HRPrzulj08}  D.J.~Higham, M.~Rasajski, and N.~Przulj, Fitting a geometric graph to a protein-protein interaction network, {Bioinformatics}, 24(8) (2008), 1093--1099.

\bibitem{JLR} S. Janson, T. {\L}uczak and A. Ruci\'nski, \emph{Random Graphs}, Wiley, New York, 2000.

\bibitem{Janssen10} J. Janssen, Spatial models for virtual networks,  in: {\sl Programs, Proofs, Processes: 6th international conference on Computability in Europe (CiE10)}, Ferreira {\sl et al.} (eds.), Springer LNCS 6158 (2010), pp.~201--210.

\bibitem{Kessler63} M.M.~Kessler,  Bibliographic coupling between scientific papers, { Am.~Doc.}, 14 (1963) 10--25.

\bibitem{Kleinberg_Nature} J.~Kleinberg, Navigation in a small world, {Nature} (2000) 406:845.




\bibitem{krioukov} D.~Krioukov, F.~Papadopoulos, M.~Kitsak, A.~Vahdat, and M.~Bogu\~{n}\'{a}, Hyperbolic Geometry of Complex Networks, {Phys. Rev. E} 82 (2010) 3:036106.

\bibitem{LaiWu05} K.-K.~Lai and Sh.-J.\ Wu, Using the patent co-citation approach to establish a new patent classification system, {\em Inf.~Proc.~Mgt.}, 41(2) (2005), 313--330.

\bibitem{Masuda05} N.~Masuda, M.~Miwa, and N.~Konno, Geographical threshold graphs with small-world and scale-free properties, {\em Phys.~Rev.~E}, 71 (2005) 3:036108.

\bibitem{Menczer04} F.~Menczer,  Lexical and semantic clustering by Web links, {JASIST}, 55(14) (2004), 1261--1269.

\bibitem{oliva} G.~Oliva and S.~Panzieri, Modeling Real Networks with Deterministic Preferential Attachment, {in: Proc.~of 19th Mediterranean Conference on Control and Automation (MED)} (2011) pp.~13--18.

\bibitem{Small73} H.~Small, Co-citation in the scientific literature: A new measure of the relationship between two documents, {JASIST} 24(4) (1973), 265--269.

\bibitem{esker} H.~van den Esker. A geometric preferential attachment model with fitness, \emph{arXiv}0801.1612v1 (2008).

\bibitem{wilson} R.~Wilson, Properties of the Spatial Preferential Attachment Model, MSc thesis, Dalhousie University, Halifax, Canada, April 2009.

\end{thebibliography}
\end{document}